\documentclass[a4paper,UKenglish]{lipics-v2018}

\usepackage{microtype}

\bibliographystyle{plainurl}

\usepackage{todonotes,comment}
\usepackage[utf8]{inputenc}

\usepackage{xspace}
\usepackage{mathtools,amssymb}

\usepackage{floatflt}

\usepackage{booktabs}
\usepackage{stmaryrd}
\usepackage{xspace}
\usepackage{esvect}

\usepackage{algorithm}
\usepackage[noend]{algpseudocode}

\usepackage{listings}
  
  \lstdefinelanguage{pseudo}{
    morekeywords={if,elseif,then,return,end,choose,guess,when,for,foreach,case},
    morekeywords=[3]{false,true,and,or,not},
    morecomment=[l]{//}
  }
  \lstset{ 
    mathescape,
    language=pseudo,
    basicstyle=\small,
    keywordstyle=\bfseries,
    keywordstyle=[3]\ttfamily,
    texcl,
    commentstyle=\footnotesize\itshape
}
\newcommand{\FV}{\mathrm{FV}}
\newcommand{\J}{\mathcal{J}}
\newcommand{\B}{\mathcal{B}}

\newcommand{\dfn}{\mathrel{\mathop:}=}
\newcommand{\ddfn}{\mathrel{\mathop{{\mathop:}{\mathop:}}}=}
\newcommand{\ar}[1]{\mathrm{ar}(#1)}
\newcommand{\rel}[1]{\mathrm{Rel}(#1)}
\newcommand{\dif}{\mathrm{diff}}
\newcommand{\simi}{\mathrm{sim}}
\newcommand{\SIM}{\mathrm{SIM}}

\newcommand{\N}{\mathbb N}
\newcommand{\primes}{\mathbb P}
\newcommand{\mA}{\mathfrak A}
\newcommand{\mB}{\mathfrak B}

\newcommand{\size}{\mathrm{size}}
\newcommand{\Ha}{\mathrm H}
\newcommand{\la}{\mathrm L}
\newcommand{\pa}{\mathrm P}


\newcommand{\NP}{\protect\ensuremath{\mathrm{NP}}}
\newcommand{\NLIN}{\protect\ensuremath{\mathrm{NLIN}}}

\newcommand{\PSPACE}{\protect\ensuremath{\mathrm{PSPACE}}} 
\newcommand{\NL}{\protect\ensuremath{\mathrm{NLOGSPACE}}}

\newcommand{\NSPACE}{\protect\ensuremath{\mathrm{NSPACE}}}

\newcommand{\FO}{\protect\ensuremath{\mathrm{FO}}}
\newcommand{\MSO}{\protect\ensuremath{\mathrm{MSO}}}
\newcommand{\CMSO}{\protect\ensuremath{\mathrm{CMSO}}}
\newcommand{\EFO}{\protect\ensuremath{\exists\mathrm{FO}}}
\newcommand{\fo}{\FO}
\newcommand{\SO}{\protect\ensuremath{\mathrm{SO}}}
\newcommand{\STC}{\protect\ensuremath{\mathrm{2TC}}}
\newcommand{\FTC}{\protect\ensuremath{\mathrm{1TC}}}
\newcommand{\TC}{\protect\ensuremath{\mathrm{TC}}}
\newcommand{\SOTC}{\protect\ensuremath{\SO(\TC)}}
\newcommand{\MSOTC}{\protect\ensuremath{\MSO(\TC)}}
\newcommand{\CMSOTC}{\protect\ensuremath{\CMSO(\TC)}}
\newcommand{\EPSOTC}{\protect\ensuremath{\SO(\STC)[\exists]}}


\theoremstyle{plain}
\newtheorem{proposition}[theorem]{Proposition}

\usepackage{booktabs}   
\usepackage{subcaption} 

\title{Expressivity within second-order transitive-closure logic} 

\author{Flavio Ferrarotti}{Software Competence Center Hagenberg,
Hagenberg,
Austria}{flavio.ferrarotti@scch.at}{https://orcid.org/0000-0003-2278-8233}{}
\author{Jan Van den Bussche}{Hasselt University, Hasselt,
Belgium}{jan.vandenbussche@uhasselt.be}{https://orcid.org/0000-0003-0072-3252}{}
\author{Jonni Virtema}{Hasselt University, Hasselt, Belgium}{jonni.virtema@uhasselt.be}{https://orcid.org/0000-0002-1582-3718}{}

\authorrunning{F. Ferrarotti, J. Van den Bussche, J. Virtema} 
\Copyright{Flavio Ferrarotti, Jan Van den Bussche and Jonni Virtema}

\subjclass{Theory of computation $\rightarrow$ Finite Model Theory}
\keywords{Expressive power, Higher order logics, Descriptive complexity}




\funding{The
research reported in this paper results from the joint project {\em
Higher-Order Logics and Structures} supported by the Austrian
Science Fund (FWF: \textbf{[I2420-N31]}) and the Research
Foundation Flanders (FWO: \textbf{[G0G6516N]}).}



\EventEditors{John Q. Open and Joan R. Access}
\EventNoEds{2}
\EventLongTitle{42nd Conference on Very Important Topics (CVIT 2016)}
\EventShortTitle{CVIT 2016}
\EventAcronym{CVIT}
\EventYear{2016}
\EventDate{December 24--27, 2016}
\EventLocation{Little Whinging, United Kingdom}
\EventLogo{}
\SeriesVolume{42}
\ArticleNo{1}
\nolinenumbers 
\hideLIPIcs  

\begin{document}

\maketitle

\begin{abstract}
Second-order transitive-closure logic, $\SOTC$, is an expressive declarative language that captures the complexity class $\PSPACE$. Already its monadic fragment, $\MSOTC$, allows the expression of various $\NP$-hard and even $\PSPACE$-hard problems in a natural and elegant manner. As $\SOTC$ offers an attractive framework for expressing properties in terms of declaratively specified computations, it is interesting to understand the expressivity of different features of the language. This paper focuses on the fragment $\MSOTC$, as well on the purely existential fragment $\SO(\STC)(\exists)$; in $\STC$, the $\TC$ operator binds only tuples of relation variables. We establish that, with respect to expressive power, $\SO(\STC)(\exists)$ collapses to existential first-order logic. In addition we study the relationship of $\MSOTC$ to an extension of $\MSOTC$ with counting features ($\CMSO(\TC)$) as well as to order-invariant $\MSO$. We show that the expressive powers of $\CMSO(\TC)$ and $\MSOTC$ coincide. Moreover we establish that, over unary vocabularies, $\MSOTC$ strictly subsumes order-invariant $\MSO$.
\end{abstract}

\section{Introduction}

Second-order transitive-closure logic, $\SOTC$, is an expressive declarative language that captures
the complexity class $\PSPACE$~\cite{HarelP84}. It extends second-order logic with a transitive closure operator over relations of relations, i.e., over super relations among relational structures. The super relations are defined by means of second-order logic formulae with free relation variables.  Already its monadic fragment, $\MSOTC$, allows the expression of $\NP$-complete problems in a natural and elegant manner. Consider, for instance, the well known Hamiltonian cycle query over the standard vocabulary of graphs, which is not expressible in monadic second-order logic~\cite{Courcelle95}. 
\begin{example}
A graph $G = (V, E)$ has a Hamiltonian cycle if the following holds:
\begin{enumerate}[a.] 
\item There is a relation $\cal R$ such that $(Z, z, Z', z') \in {\cal R}$ iff $Z' = Z \cup \{z'\}$, $z'\notin Z$, and $(z, z')\in E$.
\item The tuple $(\{x\}, x, V, y)$ is in the transitive closure of $\cal R$, for some $x,y\in V$ s.t. $(y, x) \in E$.
\end{enumerate}      
In the language of $\MSOTC$ this can be written as follows:
\[\exists X Y x y \big(X(x) \wedge \forall z (z \neq x \rightarrow \neg X(x)) \wedge \forall z (Y(z)) \wedge E(y, x) \wedge [\TC_{Z, z, Z', z'} \varphi](X, x, Y, y)\big),\] 
where $\varphi \dfn \neg Z(z') \wedge \forall x \big(Z'(x) \leftrightarrow (Z(x) \vee z' = x)\big) \wedge E(z, z')$.
\end{example}
Even some well-known $\PSPACE$-complete problems such as deciding whether a given quantified Boolean formula QBF is valid~\cite{lad77} can be expressed in MSO(TC) (see Section~\ref{sec:complexityMSOTC}). 

In general, SO(TC) offers an attractive framework for expressing properties in terms of declaratively specified computations at a high level of abstraction. There are many examples of graph computation problems that involve complex conditions such as graph colouring \cite{abukhzam:cocoon2003}, topological subgraph discovery \cite{grohe:stoc2011}, recognition of hypercube graphs \cite{FerrarottiRT14}, and many others (see \cite{bollobas:2002,ferrarotti:2008,FerrarottiGT17}). Such graph algorithms are difficult to specify, even by means of rigorous methods such as Abstract State Machines (ASMs) \cite{boerger:2003}, B \cite{abrial:2005} or Event-B \cite{abrial:2010}, because the algorithms require the definition of characterising conditions for particular subgraphs that lead to expressions beyond first-order logic. Therefore, for the sake of easily comprehensible and at the same time fully formal high-level specifications, it is reasonable to explore languages such as SO(TC). Let us see an example that further supports these observations.  

\begin{example}\label{ex:self_similar}
Self-similarity of complex networks~\cite{song:nat2005} (aka scale invariance) has practical applications in diverse areas such as the world-wide web~\cite{Dill:2002}, social networks~\cite{Guimera2003}, and biological networks~\cite{reka:jcs2005}. Given a network represented as a finite graph $G$, it is relevant to determine whether $G$ can be built starting from some graph pattern $G_p$ by recursively replacing nodes in the pattern by new, ``smaller scale'', copies of $G_b$. If this holds, then we say that $G$ is self-similar. 

Formally, a graph $G$ is \emph{self-similar} w.r.t. a graph pattern $G_p$ of size $k$, if there is a sequence of graphs $G_0, G_1, \ldots, G_n$ such that $G_0 = G_p$, $G_n = G$ and, for every pair $(G_{i}, G_{i+1})$ of consecutive graphs in the sequence, there is a partition $\{P_1, \ldots, P_k\}$ of the set of nodes of $G_{i+1}$ which satisfies the following: 
\begin{enumerate}[a.]
\item For every $j = 1, \ldots, k$, the sub-graph induced by $P_j$ in $G_{i+1}$ is isomorphic to $G_{i}$. 
\item There is a graph $G_t$ isomorphic to $G_p$ with set of nodes $V_t = \{a_1, \ldots, a_k\}$ for some $a_1 \in P_1, \ldots, a_k \in P_k$ and set of edges \[E_t =\{(a_i,a_j) \mid \text{there is an edge}\; (x, y) \; \text{of} \; G_{i+1} \; \text{such that} \; P_i(x) \; \text{and} \; P_j(y)\}.\]
\item For very $1 \leq i < j \leq k$, the closed neighborhoods $N_{G_{i+1}}[P_i]$ and $N_{G_{i+1}}[P_j]$ of $P_i$ and $P_j$ in $G_{i+1}$, respectively, are isomorphic.  
\end{enumerate}
It is straightforward to write this definition of self-similarity in SO(TC), for we can clearly write a second-order logic formula which defines such a super relation $\cal R$ on graphs and then simply check whether the pair of graphs $(G, G_p)$ is in the transitive closure of $\cal R$.
\end{example}

Highly expressive query languages are gaining relevance in areas such as knowledge representation (KR), rigorous methods and provers. There are several examples of highly  expressive  query  languages  related to applications in KR. See for instance the monadically defined queries in~\cite{RudolphK13}, the Monadic Disjunctive SNP queries in~\cite{BienvenuCLW13} or the guarded queries in~\cite{BourhisKR15}.  All of them can be considered fragments of Datalog. Regarding rigorous methods, the TLA$^+$ language~\cite{lamport:2002} is able to deal with higher-order formulations, and tools such as the TLA$^+$ Proof System\footnote{\url{https://tla.msr-inria.inria.fr/tlaps}} and the TLA$^+$ Model-Checker (TLC)\footnote{\url{https://lamport.azurewebsites.net/tla/tlc.html}} can handle them (provided a finite universe of values for TLC). Provers such as Coq\footnote{\url{https://coq.inria.fr/}} and Isabelle\footnote{\url{https://isabelle.in.tum.de/}} can already handle some high-order expression. 
Moreover, the success with solvers for the Boolean satisfiability problem (SAT) has encouraged researchers to target larger classes of problems, including $\PSPACE$-complete problems, such as satisfiability of Quantified Boolean formulas (QBF). Note the competitive evaluations of QBF solvers (QBFEVAL) held in 2016 and~2017 and recent publications on QBF solvers such as~\cite{BlinkhornB17,PeitlSS17,HeuleSB17} among several others. 

We thus think it is timely to study which features of highly expressive query languages affect their expressive power. In this sense, $\SOTC$ provides a good theoretical base since, apart from been a highly expressive query language (recall that it captures $\PSPACE$), it enables natural and precise high-level definitions of complex practical problems, mainly due to its ability to express properties in terms of declaratively specified computations.  While second-order logic extended with the standard partial fixed-point operator, as well as first-order logic closed under taking partial fixed-points and under an operator for non-deterministic choice, also capture the class of $\PSPACE$ queries over arbitrary finite structure~\cite{Richerby04}, relevant computation problems such as that in Example~\ref{ex:self_similar} are clearly more difficult to specify in these logics. The same applies to the extension of first-order logic with the partial fixed-point operator, which is furthermore subsumed by $\SOTC$ since it captures $\PSPACE$ only on the class of ordered finite structures~\cite{AbiteboulV89}. Note that $\SOTC$ coupled with hereditary finite sets and set terms, could be considered as a kind of declarative version of Blass, Gurevich, and Shelah (BGS) model of abstract state machine~\cite{BLASS1999141}, which is a powerful language in which all computable queries to relational databases can be expressed~\cite{BLASS200220}.   

Our results can be summarized as follows.
\begin{enumerate}
\item
We investigate to what extent universal quantification and
negation are important to the expressive power of $\SOTC$.  Specifically, we
consider the case where TC-operators are applied only to
second-order variables.  Of course, a second-order variable can
simulate a first-order variable, since we can express already in
first-order logic (FO) that a set is a singleton.  This, however,
requires universal quantification.

We define a ``purely existential'' fragment of $\SOTC$,
$\SO(\STC)(\exists)$, as the fragment without universal
quantifiers and in which $\TC$-operators occur only positively
and bind only tuples of relation variables.  We show that the expressive
power of this fragment collapses to that of existential FO.

For SO alone, this collapse is rather obvious and was already
remarked by Rosen in the introduction of
his paper \cite{rosen_existential}.  Our result generalizes
this collapse to include TC operators, where it is no longer obvious.

\item
We investigate the expressive power of the monadic fragment,
$\MSOTC$.  On strings, this logic is equivalent to the complexity
class NLIN\@.  Already on unordered structures, however, we show 
that $\MSOTC$ can express counting terms and numeric predicates 
in $\NL$.
In particular,
$\MSOTC$ can express queries not expressible in the fixpoint logic FO(LFP)\@.
We also discuss the fascinating
open question whether the converse holds as
well.

\item
We compare the expressive power of $\MSOTC$ to
that of order-invariant $\MSO$.  Specifically, we show that
$\MSOTC$ can express queries not expressible in order-invariant
$\MSO$; over monadic vocabularies, we show that order-invariant
MSO is subsumed by $\MSOTC$.  Again, what happens over
higher-arity relations is an interesting open question.
\end{enumerate}

This paper is organized as follows.
In Section \ref{sec:preli} definitions and basic notions related
to $\SOTC$ are given.
In Section \ref{sec:complexityMSOTC}
the complexity of model checking is studied.
Section \ref{sec:epo} is dedicated to
establishing the collapse of $\SO(\STC)(\exists)$ to existential
first-order logic. Sections \ref{sec:counting} and
\ref{sec:order} concentrate on the relationships between $\MSOTC$
and the counting extension $\CMSO(\TC)$ and order-invariant
$\MSO$, respectively.  We conclude with a discussion of open
questions in Section~\ref{seconc}.

\section{Preliminaries}\label{sec:preli}
We assume that the reader is familiar with finite model theory, see e.g., \cite{FMTEF} for a good reference. 
For a tuple $\vec{a}$ of elements, we denote by $\vec{a}[i]$ the $i$th element of the tuple. 
%
%
%
%
We recall from the literature, the syntax and semantics of first-order (\FO) and
second-order (\SO) logic, as well as their extensions with the
transitive closure operator (\TC). We assume a sufficient supply of \emph{first-order} and \emph{second-order} \emph{variables}. The natural number $\ar{R}\in\N$, is the \emph{arity} of the second-order variable $X$. By \emph{variable}, we mean either a first-order or second-order variable.
Variables $\chi$ and $\chi'$ have the same \emph{sort} if either both $\chi$ and $\chi'$ are first-order variables, or both are second-order variables of the same arity.  Tuples $\vec{\chi}$ and $\vec{\chi'}$ of variables have the same \emph{sort}, if the lengths of $\vec{\chi}$ and $\vec{\chi'}$ are the same and, for each $i$, the sort of $\vec{\chi}[i]$ is the same as the sort of $\vec{\chi}[i]$.

\begin{definition}
The formulas of $\SOTC$ are defined by the following grammar:
\[
\varphi \ddfn x=y \mid X(x_1,\dots,x_k) \mid \neg \varphi \mid (\varphi \lor \varphi) \mid \exists x \varphi \mid \exists Y \varphi \mid [\TC_{\vec{X},\vec{X'}}\varphi](\vec{Y},\vec{Y'}),
\]
where $X$ and $Y$ are second-order variables, $k=\ar{X}$, $x, y,x_1, \dots, x_k$ are first-order variables, $\vec{X}$ and $\vec{X'}$ are disjoint tuples of variables of
the same sort, and $\vec{Y}$ and $\vec{Y'}$
are also tuples of variables of that same sort (but not necessarily
disjoint).
\end{definition}
The set of free variables of a formula $\varphi$, denoted by $\FV(\varphi)$ is defined as usual. For the $\TC$ operator, we define
\[
    \FV([\TC_{\vec{X},\vec{X'}}\varphi](\vec{Y},\vec{Y'})) \dfn
    (\FV(\varphi) - (\vec{X} \cup \vec{X'})) \cup \vec{Y} \cup
    \vec{Y'}.
\]    
Above in the right side, in order to avoid cumbersome notation, we use $\vec{X}$, $\vec{X'}$, $\vec{Y}$ and $\vec{Y'}$ to denote the sets of variables occurring in the tuples.

A \emph{vocabulary} is a finite set of variables.
A (finite) \emph{structure} $\mA$ over a vocabulary $\tau$
is a pair $(A,I)$, where
$A$ is a finite nonempty set called the \emph{domain} of $\mA$, and
$I$ is an \emph{interpretation} of $\tau$ on $A$.  By this we
mean that whenever $x\in\tau$
is a first-order variable, then $I(x) \in A$, and whenever $X\in\tau$ is a
second-order variable of arity $m$, then $I(X) \subseteq A^m$.
In this article, structures are always finite.
We denote $I(X)$ also by $X^{\mA}$. For a variable $X$ and a suitable value $R$ for that variable, $\mA[R/X]$ denotes the structure over $\tau\cup\{X\}$ equal to $\mA$ except that $X$ is mapped to $R$. We extend the notation also to tuples of variables and values, $\mA[\vec{X}/\vec{R}]$, in the obvious manner.
We say that a
vocabulary $\tau$
is \emph{appropriate} for a
formula $\varphi$
if $\FV(\varphi) \subseteq \tau$. 

\begin{definition}
Let $\mA$ be a structure
over $\tau$ and $\varphi$ an  $\SOTC$-formula such that $\tau$ is appropriate for $\varphi$.
The satisfaction of $\varphi$ by $\mA$, denoted by $\mA
\models \varphi$, is defined as follows.  We only give the cases for
second-order quantifiers and transitive closure operator; the remaining cases are defined as usual.
\begin{itemize}
\item For second-order variable $X$: $\mA\models\exists X  \varphi$  iff  $\mA[R/X] \models \varphi$, for some $R\subseteq A^{\ar{X}}$.
\item For the case of the $\TC$-operator, consider a formula $\psi$ of the form $[\TC_{\vec{X},\vec{X'}}\varphi](\vec{Y},\vec{Y'})$ and  let
    $\mA=(A,I)$.  Define $\J_{\vec{X}}$ to be the following set
\[
     \{J(\vec{X}) \mid \text{$J$ is an interpretation of $\vec{X}$ on $A$}\}  = \{J(\vec{X'}) \mid \text{$J$ is an interpretation of $\vec{X'}$ on $A$}\}
     \]
and consider the binary relation $\B$ on $\J_{\vec{X}}$ defined as
    follows: 
    \[
    \B \dfn \{(\vec{R},\vec{R'}) \in \J_{\vec{X}}\times \J_{\vec{X}} \mid  \mA[\vec{R}/\vec{X},\vec{R'}/\vec{X'}] \models \varphi\}.
    \]
    We set $\mA \models \psi$ to hold if
    $(I(\vec{Y}),I(\vec{Y'}))$ belongs to the transitive closure
    of $\B$.
Recall that, for a binary relation $\B$ on any set $\J$, the
transitive closure of $\B$ is defined by
\begin{align*}
\TC(\B) \dfn &\{ ({a}, {b}) \in \J\times \J
\mid \text{$\exists n > 0$ and ${e}_0,\dots,{e}_n\in \J$}\\
& \text{such that ${a}={e}_0$, ${b}={e}_n$, and
  $({e}_i,{e}_{i+1})\in \B$ for all $i<n$}
 \}.
\end{align*}
\end{itemize}
\end{definition}
By $\TC^m$ we denote the variant of $\TC$  in which the quantification of $n$ above is restricted to natural numbers $\leq m$. That is,  $\TC^m(\B)$ consists of pairs $(\vec{a}, \vec{b})$ such that $\vec{b}$ is reachable from $\vec{a}$ by $\B$ in at most $m$ steps. Moreover, by $\STC$ and $\STC^m$ we denote the syntactic restrictions of $\TC$ and $\TC^m$ of the form
\[
[\TC_{\vec{X},\vec{X'}}\varphi](\vec{Y},\vec{Y'}) \text{ and } [\TC^m_{\vec{X},\vec{X'}}\varphi](\vec{Y},\vec{Y'}),
\]
where $\vec{X}$, $\vec{X'}$, $\vec{Y}$, $\vec{Y'}$ are tuples of second-order variables (i.e. without first-order variables). The logic $\SO(\STC)$ then denotes the extension of second-order logic with $\STC$-operator. Analogously, by $\FO(\FTC)$, we denote the extension of first-order logic with applications of such transitive-closure operators that bind only first-order variables.\footnote{In the literature $\FO(\FTC)$ is often denoted by $\FO(\TC)$.}

\section{Complexity of MSO(TC)}\label{sec:complexityMSOTC}

The descriptive complexity of different logics with the
transitive closure operator has been thoroughly studied by Immerman. Let $\SO(\mathrm{arity}\, k)(\TC)$ denote the fragment of $\SOTC$ in which second-order variables are all of arity $\leq k$.
\begin{theorem}[\cite{Immerman:1987, Immerman88}] \label{thm:fotc}~
\begin{itemize}
\item On finite ordered structures, first-order transitive-closure logic $\FO(\FTC)$ captures nondeterministic logarithmic space $\NL$.
\item On strings (word structures),
$\SO(\mathrm{arity}\, k)(\TC)$ captures the complexity class $\NSPACE(n^k)$.
\end{itemize}
\end{theorem}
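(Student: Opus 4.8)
These are classical results of Immerman, and I would prove each equivalence by establishing the two inclusions separately, noting that the second bullet is essentially the first one lifted from first-order variables to second-order variables of bounded arity: over an $n$-element ordered universe an $O(\log n)$-bit workspace is faithfully coded by a single domain element (its rank in the order), whereas an $O(n^k)$-bit workspace is faithfully coded by a single relation of arity $k$. So I would set up a single configuration-graph argument and instantiate it in both settings.

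For the ``logic is contained in the complexity class'' directions I would argue by induction on the structure of the formula, maintaining the invariant that the relation defined by a subformula, given the values of its free first-order variables on the input, is decidable within the target class. Atomic formulas, Boolean connectives and first-order quantification are evaluable in deterministic logarithmic space $\la\subseteq\NL$ (respectively stay within the $\NSPACE(n^k)$ bound, since guessing an arity-$k$ relation costs only $n^k$ space and a fixed formula nests boundedly many quantifiers); a subformula $[\TC_{\vec{X},\vec{X'}}\varphi](\vec{Y},\vec{Y'})$ is exactly the reachability query in the graph whose vertices are the admissible values of $\vec{X}$ and whose edge relation is the relation defined by $\varphi$. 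Reachability in a graph whose edge relation is checkable in $\NL$ (respectively $\NSPACE(n^k)$) is again in $\NL$ (respectively $\NSPACE(n^k)$): one guesses the path vertex by vertex and verifies each edge, composing the space-bounded nondeterministic edge test with the outer reachability search (each such machine can be taken to halt, so it is safe to call it as a subroutine). The one delicate case is a $\TC$-subformula occurring negatively, which asks for \emph{non}-reachability; this stays in the target class by the Immerman--Szelepcs\'{e}nyi theorem ($\NL=\mathrm{co}\NL$, and more generally closure of $\NSPACE(s)$ under complement for $s(n)\ge\log n$), which also takes care of universal second-order quantifiers in the second item.

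For the converse inclusions I would encode accepting computations as reachability. Fix a machine $M$ running within the relevant space bound. A configuration of $M$ comprises a control state, $O(1)$ head positions and the work-tape contents; over an ordered structure with domain of size $n$ the work tape carries $O(\log n)$ bits in the $\NL$ case, so a full configuration is a tuple $\vec{c}$ of domain elements of fixed length, while over a string the work tape carries $O(n^k)$ bits in the $\NSPACE(n^k)$ case, which is exactly the information content of one relation of arity $k$, so a configuration becomes a tuple of second-order objects of arity $\le k$ (folding the finitely many bits of control information into these relations). Using the linear order built into the structure (respectively into the string) together with the input predicates, one writes a first-order formula $\mathrm{step}$ with two tuples of free variables expressing the successor-configuration relation, as well as first-order formulas $\mathrm{init}$ and $\mathrm{acc}$ each with one tuple of free variables picking out the initial and accepting configurations; then $M$ accepts iff $\exists\vec{x}\,\vec{y}\,(\mathrm{init}(\vec{x})\wedge\mathrm{acc}(\vec{y})\wedge[\TC_{\vec{u},\vec{v}}\mathrm{step}](\vec{x},\vec{y}))$, handling the zero-step case separately. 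This is an $\FO(\FTC)$ sentence in the first item, and a sentence of $\SO(\mathrm{arity}\,k)(\TC)$ — in fact with $\TC$ applied only to tuples of arity-$k$ relation variables — in the second.

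I expect the work to be bookkeeping rather than conceptual. The two places that need care are: (i) checking that the configuration encoding is genuinely first-order definable — in particular that the arithmetic needed to address the $O(\log n)$ work-tape cells, and its arity-$k$ analogue over strings, is available from the order and the input predicates; and (ii) verifying that every inductive step of the ``logic $\subseteq$ class'' direction stays within the space bound, which rests on composing nondeterministic space-bounded subroutines and on the Immerman--Szelepcs\'{e}nyi closure under complement, so that arbitrarily nested $\TC$-operators and negations do not escape the target class.
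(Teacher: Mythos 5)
This theorem is imported by citation (Immerman 1987/88); the paper contains no proof of its own. Your sketch is the standard argument from those sources --- configuration-graph encoding for the hard direction, structural induction plus Immerman--Szelepcs\'enyi for negated $\TC$ and universal quantifiers in the easy direction --- and it is essentially correct, including the two genuinely delicate points you flag (closure under complement, and the fact that the paper's $\TC$ semantics requires at least one step).
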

See also the discussion in the conclusion section.

By the above theorem, $\MSOTC$ captures nondeterministic linear space
$\NLIN$ over strings.
Deciding whether a given \emph{quantified Boolean formula} is
valid (QBF) is a well-known $\PSPACE$-complete problem
\cite{lad77}. Observe that there are $\PSPACE$-complete problems
already in $\NLIN$; in fact QBF is such a problem. Thus,
we can conclude the following.
The inclusion in $\PSPACE$ is clear.

\begin{proposition}
Data complexity of $\MSOTC$ is $\PSPACE$-complete.
\end{proposition}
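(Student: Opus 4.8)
The plan is to establish the two directions of the claimed $\PSPACE$-completeness separately, both of which are essentially corollaries of material already assembled in this section. For membership in $\PSPACE$, I would argue directly by a recursive model-checking procedure. Given a structure $\mA$ of size $n$ and a fixed $\MSOTC$-formula $\varphi$, one evaluates $\varphi$ by the usual structural recursion; the only nonstandard case is a subformula $[\TC_{\vec X,\vec X'}\psi](\vec Y,\vec Y')$, where the $\vec X$ are monadic second-order variables. An interpretation of a single monadic variable is a subset of $A$, so a tuple $\vec R$ of such interpretations is described by $O(n)$ bits; the relation $\B$ lives on a set $\J_{\vec X}$ of size $2^{O(n)}$, and membership in the transitive closure of $\B$ can be checked nondeterministically (guess the length-$\leq 2^{O(n)}$ path, verifying each edge by a recursive call to the model checker for $\psi$) or, if one prefers a deterministic bound, by Savitch-style reachability. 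Either way the space used is polynomial in $n$, and since the formula is fixed the recursion depth is constant, so the whole evaluation runs in $\PSPACE$. This is exactly what the sentence ``The inclusion in $\PSPACE$ is clear'' is pointing at, so I would keep this part brief.

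For $\PSPACE$-hardness, I would reduce from QBF. By Theorem~\ref{thm:fotc}, $\MSOTC$ captures $\NSPACE(n)=\NLIN$ over strings, and hence every problem in $\NLIN$ is expressible by a fixed $\MSOTC$-sentence over the appropriate string vocabulary. The validity problem for quantified Boolean formulas, encoded as strings in the natural way, is solvable in linear space (indeed, the standard recursive evaluation of a QBF uses space linear in the length of the formula), so QBF $\in\NLIN$. Therefore there is a fixed $\MSOTC$-sentence $\varphi_{\mathrm{QBF}}$ such that, for every word structure $w$ encoding a quantified Boolean formula, $w\models\varphi_{\mathrm{QBF}}$ iff the encoded formula is valid. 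Since QBF is $\PSPACE$-complete under logspace (indeed first-order) reductions~\cite{lad77}, and the map from a QBF instance to its word-structure encoding is trivially computable, the data complexity of evaluating $\varphi_{\mathrm{QBF}}$ on such inputs is $\PSPACE$-hard. Combining this with the membership argument gives the proposition.

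The one point that deserves a sentence of care — and the main obstacle, such as it is — is that Theorem~\ref{thm:fotc} is stated \emph{over strings}, whereas data complexity of a logic is normally quantified over all structures of the relevant vocabulary. This is not a real difficulty: hardness only requires that the problem be hard \emph{on some family of inputs}, and the word structures encoding QBF instances are a perfectly good such family, so restricting attention to them is legitimate for the lower bound. (One should make sure the $\MSOTC$-sentence is allowed to reject, or behave arbitrarily, on non-well-formed encodings; this is harmless.) Conversely, the $\PSPACE$ upper bound given above is genuinely uniform over all finite structures, so no mismatch arises there. Hence the two bounds meet and the proposition follows; I would present the hardness half as the substantive content and dispatch the upper bound in a line or two, exactly as the surrounding text already anticipates.
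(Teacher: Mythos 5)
Your proposal is correct and follows essentially the same route as the paper: the upper bound is the straightforward polynomial-space model-checking algorithm, and the lower bound uses the fact that QBF is a $\PSPACE$-complete problem lying in $\NLIN$, combined with Theorem~\ref{thm:fotc} stating that $\MSOTC$ captures $\NLIN$ over strings. The paper leaves both halves at exactly the level of detail you sketch, so nothing further is needed.
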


We next turn to combined complexity of model checking.
By the above proposition, this is at least 
$\PSPACE$-hard.  However, the straightforward algorithm for model
checking $\MSOTC$ clearly has polynomial-space combined
complexity.  We thus conclude:
\begin{proposition}
Combined complexity of $\MSOTC$ is $\PSPACE$-complete.
\end{proposition}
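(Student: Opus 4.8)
The lower bound is immediate from the preceding proposition, so it remains to show that combined model checking for $\MSOTC$ lies in $\PSPACE$: given a finite structure $\mA$, an $\MSOTC$-formula $\varphi$, and an assignment of values to $\FV(\varphi)$, one can decide $\mA\models\varphi$ using space polynomial in $|\mA|+|\varphi|$. The plan is to spell out the obvious recursive evaluation procedure and verify that each syntactic construct contributes only polynomially much working space; since at most one recursive branch is active at a time (workspace is reused across sibling subformulas), the total space used is bounded by the formula-nesting depth, at most $|\varphi|$, times the polynomial per-level cost. Equivalently one may freely appeal to $\mathrm{NPSPACE}=\PSPACE$ and argue by guess-and-check.

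For the atoms $x=y$ and $X(x_1,\dots,x_k)$ the test is a direct lookup in $\mA$. Boolean connectives recurse. A first-order quantifier $\exists x\,\psi$ is handled by iterating over the $|A|$ candidate values for $x$, reusing the same workspace for each; the active value costs $\lceil\log|A|\rceil$ bits. A second-order quantifier $\exists X\,\psi$ is monadic, so it is handled the same way, iterating one at a time over the $2^{|A|}$ subsets $R\subseteq A$ and reusing workspace; the active $R$ costs $|A|$ bits. Hence the storage needed for the current value assignment never exceeds $|\varphi|\cdot|A|$ bits.

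The only construct requiring comment is the transitive-closure operator $[\TC_{\vec X,\vec X'}\psi](\vec Y,\vec Y')$. The set $\J_{\vec X}$ over which the transitive closure is formed has size $2^{|\vec X|\cdot|A|}$, which is exponential, so it cannot be materialised; this is the one non-routine point of the argument. However, a single element of $\J_{\vec X}$ is an assignment of subsets of $A$ to the variables of $\vec X$, hence representable in $|\vec X|\cdot|A|$ bits, and testing whether $(\vec R,\vec R')\in\B$ is exactly a recursive model-checking call for the strictly smaller subformula $\psi$ on $\mA[\vec R/\vec X,\vec R'/\vec X']$. Thus $\B$ is an implicitly presented graph with short vertex names and a $\PSPACE$-decidable edge relation, and we must decide reachability from $I(\vec Y)$ to $I(\vec Y')$ in it. This is done by the standard polynomial-space reachability routine: nondeterministically guess the successive intermediate vertices, keeping only the current vertex and a step counter bounded by $|\J_{\vec X}|$ (so $O(|\vec X|\cdot|A|)$ bits), accepting once $I(\vec Y')$ is reached — placing the task in $\mathrm{NPSPACE}=\PSPACE$; or, deterministically, by Savitch-style divide-and-conquer, where the recursion on ``reachable within $2^i$ steps'' has polynomial depth and each level stores one polynomial-size midpoint.

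Combining the pieces, the recursion on the structure of $\varphi$ has depth at most $|\varphi|$, and at each level the extra space — for the variable assignment, the reachability counters or midpoints, and the loop indices over $A$ and over subsets of $A$ — is polynomial in $|\mA|+|\varphi|$; because this cost is additive along the recursion, the whole procedure runs in polynomial space. The only subtlety, as noted, is that the node set inside a $\TC$ operator is exponential, which is handled precisely because reachability in a graph with polynomial-length vertex encodings and a polynomial-space edge test is itself in $\PSPACE$, and nesting such reachability computations inside the formula recursion causes no blow-up since the per-level space is reused. Hence the combined complexity of $\MSOTC$ is $\PSPACE$-complete.
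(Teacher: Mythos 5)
Your proof is correct and is just a detailed elaboration of the paper's own argument, which simply asserts that the straightforward recursive model-checking algorithm runs in polynomial space (with hardness inherited from the data-complexity proposition). The one point you rightly single out --- reachability over the exponentially large but succinctly represented vertex set $\J_{\vec X}$, handled via nondeterministic guessing with a step counter (or Savitch) together with $\mathrm{NPSPACE}=\PSPACE$ --- is exactly what makes the paper's ``straightforward algorithm'' work.
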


For combined complexity, we can actually sharpen
the $\PSPACE$-hardness;
already a very simple fragment of $\MSOTC$ is $\PSPACE$-complete.

Specifically,
we give a reduction from the \emph{corridor tiling} problem, which is a well-known $\PSPACE$-complete problem.
Instance of the corridor tiling problem is a tuple $P=(T,H,V,\vec{b},\vec{t},n)$, where $n\in \N$ is a positive natural number, $T=\{1,\dots,k\}$, for some $k\in\N$, is a finite set of \emph{tiles}, $H,V\subseteq T\times T$ are \emph{horizontal} and \emph{vertical constraints}, and $\vec{b},\vec{t}$ are $n$-tuples of tiles from $T$. A \emph{corridor tiling for $P$} is a function $f: \{1,\dots,n\} \times \{1\dots,m\} \rightarrow T$, for some $m\in \N$, such that 
\begin{itemize}
\item $\big( f(1,1), \dots f(n,1) \big) = \vec{b}$ and  $\big( f(1,m), \dots f(n,m) \big) = \vec{t}$,
\item $\big( f(i,j),f(i+i,j) \big) \in H$, for $i<n$ and $j\leq m$, 
\item $\big( f(i,j),f(i,j+1) \big) \in V$, for $i\leq n$ and $j< m$.
\end{itemize}

\noindent The \emph{corridor tiling problem} is the following $\PSPACE$-complete decision problem \cite{Chlebus:1986}:\\
\textbf{Input:} An instance $P=(T,H,V,\vec{b},\vec{t},n)$ of the corridor tiling problem.\\
\textbf{Output:} Does there exist a corridor tiling for $P$?

Let \emph{monadic $\STC[\forall\FO]$} denote the fragment of $\MSO(\STC)$ of the form $[\TC_{\vec{X},\vec{X'}}\varphi] (\vec{Y},\vec{Y'})$, where $\psi$ is a formula of universal first-order logic. 

\begin{theorem}
Combined complexity of model checking for monadic $\STC[\forall\FO]$ is $\PSPACE$-complete.
\end{theorem}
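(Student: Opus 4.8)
The plan is to reduce the corridor tiling problem to model checking for monadic $\STC[\forall\FO]$. Given an instance $P=(T,H,V,\vec{b},\vec{t},n)$, we first fix a suitable encoding as a finite structure $\mA$. A natural choice is to take the domain $A$ to be $T\cup\{1,\dots,n\}$ (or a disjoint copy thereof), and to equip it with monadic predicates and low-arity relations that encode $H$, $V$, $\vec{b}$, $\vec{t}$, and the linear order on $\{1,\dots,n\}$. Crucially, the vocabulary and the structure $\mA$ must be computable from $P$ in polynomial time; since $P$ itself is given in unary for $n$ and explicitly for $T,H,V$, this is unproblematic. A row of a candidate tiling is a function $\{1,\dots,n\}\to T$, which we represent by a tuple $\vec{X}=(X_1,\dots,X_k)$ of monadic second-order variables where $X_t$ holds exactly of those positions $i\in\{1,\dots,n\}$ carrying tile $t$; a ``well-formed row'' condition (each position gets exactly one tile) is expressible by a universal first-order formula over these variables, and so is the horizontal-constraint condition $(f(i),f(i+1))\in H$ using the successor/order predicate.

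Next I would write the step relation. The formula $\varphi(\vec{X},\vec{X'})$ should assert: $\vec{X}$ and $\vec{X'}$ both encode well-formed rows, $\vec{X}$ satisfies the horizontal constraints, and every pair of vertically adjacent tiles $(f_{\vec{X}}(i),f_{\vec{X'}}(i))$ lies in $V$. All three conjuncts are universal first-order over the monadic variables $\vec{X},\vec{X'}$ (the vertical-constraint conjunct quantifies universally over positions $i$ and checks, via the explicitly encoded finite relation $V$, membership of the tile pair — this is a disjunction over the finitely many allowed pairs, each a conjunction of literals). Then a corridor tiling for $P$ exists iff the tuple $\vec{B}$ encoding the bottom row $\vec{b}$ and the tuple $\vec{T}$ encoding the top row $\vec{t}$ satisfy $[\TC_{\vec{X},\vec{X'}}\varphi](\vec{B},\vec{T})$, together with a separate universal-FO check that $\vec{b}$ itself satisfies the horizontal constraints (or we absorb the top row's horizontal check by making $\varphi$ also check horizontal constraints on $\vec{X'}$; one must be slightly careful so that the first and last rows are both validated). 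The resulting sentence is of the required shape $[\TC_{\vec{X},\vec{X'}}\varphi](\vec{Y},\vec{Y'})$ with $\varphi$ universal first-order, monadic second-order variables, up to the trivial matter of prepending a constant-sized conjunct and plugging in closed tuples for $\vec{Y},\vec{Y'}$ (which are definable, or which we add as fresh constants to the encoding). Correctness follows because a path in $\B$ from $\vec{B}$ to $\vec{T}$ is exactly a sequence of rows $f(\cdot,1)=\vec b,\dots,f(\cdot,m)=\vec t$ with all horizontal and vertical constraints satisfied, i.e.\ a corridor tiling.

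For the converse inclusion, $\PSPACE$-membership, I would invoke the preceding proposition: $\MSOTC$ has $\PSPACE$ combined complexity, and monadic $\STC[\forall\FO]$ is a fragment of it, so model checking is in $\PSPACE$. Combined with the reduction above and the known $\PSPACE$-completeness of corridor tiling \cite{Chlebus:1986}, this gives $\PSPACE$-completeness.

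The main obstacle, and the place where care is needed rather than cleverness, is the encoding: one must pin down a structure over a vocabulary with monadic predicates plus a built-in linear order (or successor) on the $n$ positions so that (i) everything about $H$ and $V$ is first-order accessible, (ii) the ``exactly one tile per position'' and ``$(f(i),f(i+1))\in H$'' conditions come out \emph{universal} first-order (no existential quantifiers allowed), and (iii) the bottom and top rows $\vec B,\vec T$ are available as concrete values to feed into the outer $\TC$. The universality restriction is the delicate point — membership in the finite relations $H,V$ must be phrased as bounded disjunctions of literals, not as existential guesses — but since $T,H,V$ are part of the input and hence of fixed, input-dependent but formula-time-independent size, this is a matter of writing a disjunction over the explicitly listed allowed tile-pairs. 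A secondary subtlety is ensuring the very first row of the path also satisfies the horizontal constraints; this is handled either by a one-shot universal-FO precondition conjoined outside the $\TC$ (still within a trivially larger but same-shape formula) or by symmetrizing $\varphi$ to validate both of its arguments.
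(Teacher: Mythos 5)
Your proposal is correct and follows essentially the same route as the paper: a reduction from corridor tiling in which rows are encoded as $k$-tuples of monadic second-order variables over the domain $\{1,\dots,n\}$ with a built-in successor, the step relation is a universal first-order formula checking well-formedness plus the horizontal and vertical constraints, the $\TC$ is applied from the bottom-row tuple to the top-row tuple, and the $\PSPACE$ upper bound is inherited from $\MSOTC$. The only difference is that you explicitly flag (and fix) the subtlety that the bottom row's horizontal constraints must also be validated, a point the paper's proof leaves implicit.
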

\begin{proof}
Inclusion to $\PSPACE$ follows from the corresponding result for $\MSOTC$. In order to prove hardness,
we give a reduction from corridor tiling. Let $P=(T,H,V,\vec{b},\vec{t},n)$ be an instance of the corridor tiling problem and set $k\dfn \lvert T\rvert$. Let $\tau= \{s,X_1,\dots X_k, Y_1,\dots,Y_k\}$ be a vocabulary, where $s$ is a binary second-order variable and $X_1,\dots X_k, Y_1,\dots,Y_k$ are monadic second-order variables. Let $\mA_P$ denote the structure over $\tau$ such that $A=\{1,\dots,n\}$, $I(s)$ is the canonical successor relation on $A$, and, for each $i\leq k$,  $I(X_i)=\{j\in A \mid \vec{b}[j]=i\}$ and $I(Y_i)=\{j\in A \mid \vec{t}[j]=i\}$. Define
\begin{align*}
\varphi_H &\dfn \forall x y \big (s(x,y) \rightarrow \bigvee_{(i,j)\in H} Z'_i(x)\land Z'_j(y) \big), \quad \varphi_V \dfn \forall x \bigvee_{(i,j)\in V} Z_i(x)\land Z'_j(y) \\
\varphi_T &\dfn \forall x \bigvee_{i\in T} \big( Z'_i(x) \land \bigwedge_{j\in T, i\neq j} \neg Z'_j(x) \big),
\end{align*}
where $\vec{Z}$ and $\vec{Z'}$ $k$-tuples of distinct monadic second-order variables not in $\tau$. 
We then define
\(
\varphi_P \dfn \TC_{\vec{Z},\vec{Z'}}[\varphi_T \land \varphi_H\land \varphi_V] (\vec{X}, \vec{Y}).
\)
We claim that $\mA_P\models \varphi_P$ if and only if there exists a corridor tiling for $P$, from which the claim follows.
\end{proof}

\section{Existential positive SO(2TC) collapses to EFO}\label{sec:epo}
Let $\EPSOTC$ denote the syntactic fragment of $\SO(\STC)$ in which existential quantifiers and the $\TC$-operator occur only positively, that is, in scope of even number of negations. In this section, we show that the expressive power of $\EPSOTC$ collapses to that of existential first-order logic $\EFO$. In this section, $\TC$-operators are applied only to tuples of second-order variables.
As already discussed in the introduction,
this restriction is vital: the formula $[\TC_{x,x'} R(x,x') \lor x=x'](y,y)'$ expresses reachability in directed graphs, which
is not definable even in the full first-order logic.

To facilitate our proofs we start by introducing some helpful terminology.
%
%

\begin{definition}
Let  $\vv{a}$ and $\vv{b}$ be tuples  of the same length and $I$ a set of natural numbers.
The \emph{difference} $\dif(\vv{a},\vv{b})$ of the tuples $\vv{a}$ and $\vv{b}$ is defined as follows
\[
\dif(\vv{a},\vv{b}) :=\{ i \mid \vv{a}[i] \neq \vv{b}[i] \}.
\]
The \emph{similarity} $\simi(\vv{a},\vv{b})$ of tuples $\vv{a}$ and $\vv{b}$ is defined as follows
\[
\simi(\vv{a},\vv{b}) :=\{ i \mid \vv{a}[i] = \vv{b}[i] \}.
\]
We say that the tuples $\vv{a}$ and $\vv{b}$ are \emph{pairwise compatible} if the sets $\{ \vv{a}[i] \mid i \in  \mathrm{diff}(\vv{a},\vv{b}) \}$ and $\{ \vv{b}[i] \mid i \in  \mathrm{diff}(\vv{a},\vv{b}) \}$ are disjoint.
The tuples $\vv{a}$ and $\vv{b}$ are \emph{pairwise compatible outside $I$} if $\{ \vv{a}[i] \mid i \in  \mathrm{diff}(\vv{a},\vv{b}), i\notin I \}$ and $\{ \vv{b}[i] \mid i \in  \mathrm{diff}(\vv{a},\vv{b}), i\notin I  \}$ are disjoint.
The tuples $\vv{a}$ and $\vv{b}$ are \emph{pairwise $I$-compatible} if $\vv{a}$ and $\vv{b}$ are pairwise compatible and $\mathrm{sim}(\vv{a},\vv{b}) = I$.
\end{definition}

\begin{definition}
Let $\sigma\subseteq\tau$ be vocabularies, $\mA$ a $\tau$-structure, and $\vv{a}$ a tuple of elements of $A$.
The (quantifier-free) $\sigma$-type of $\vv{a}$ in $\mA$ is the set of those quantifier free $\fo(\sigma)$-formulae $\varphi(\vv{x})$ such that $\mA[\vv{a}/\vv{x}]\models \varphi$.
\end{definition}

The following lemma establishes that $\STC$-operators that are applied to $\EFO$-formulas can be equivalently expressed by the finite $\STC^m$-operator.
\begin{lemma}\label{tctotck}
Every formula $\varphi$ of the form $[\TC_{\vec{X}, \vec{X'}} \theta] (\vec{Y}, \vec{Y'})$, where $\theta\in\EFO$ and $\vec{X}$, $\vec{X'}$, $\vec{Y}$, $\vec{Y'}$ are tuples of second-order variables, is equivalent with the formula $[\TC^k_{\vec{X}, \vec{X'}} \theta] (\vec{Y}, \vec{Y'})$, for some $k\in\N$.
\end{lemma}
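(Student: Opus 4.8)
The plan is to show that iterating the relation $\B$ (defined from the $\EFO$-formula $\theta$) for more than some fixed number $k$ of steps never produces a new pair, so that the unbounded transitive closure already stabilizes at $\TC^k$. The key observation is that $\theta$, being existential first-order, is preserved under a suitable notion of homomorphism or, more precisely, that whether $(\vec R,\vec R')\in\B$ depends only on the quantifier-free types of the tuples involved together with the ``amount of overlap'' between $\vec R$ and $\vec R'$. Since $\theta$ has a fixed quantifier rank and a fixed number of existential quantifiers, say $q$, any witnessing assignment for $\exists\vec z\,\theta_0$ touches at most $q$ new elements; this is what the terminology of Section~\ref{sec:epo} (difference, similarity, pairwise $I$-compatibility, quantifier-free $\sigma$-types) is being set up to exploit.

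Concretely, I would first argue a ``step-normalization'' claim: if $(\vec e_0,\vec e_1,\dots,\vec e_n)$ is a witnessing $\B$-path, one can replace it by a shorter path between the same endpoints whenever $n$ exceeds a bound depending only on $\theta$ and the arities of $\vec X$. The idea is that along the path, the only thing that can change in a way that matters for $\theta$ is the finite data consisting of the quantifier-free type of the current tuple and the difference set relative to the next tuple; because $\theta\in\EFO$, once we have ``enough'' intermediate relations we can contract a subpath, re-using elements/relations that already appear, without destroying the property that each consecutive pair satisfies $\theta$. Formally this is a pumping-style argument: among $k{+}1$ consecutive tuples, two must induce the same relevant finite invariant, and existential-positivity of $\theta$ lets us splice out the loop while still satisfying $\theta$ on the new consecutive pair (one uses that $\EFO$ is preserved under the relevant ``collapsing'' map, i.e. under homomorphic-type-preserving identifications). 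The bound $k$ is then the number of possible invariants, which is a function of the vocabulary, the arities, and the quantifier rank of $\theta$ only — crucially independent of the structure $\mA$ and its size.

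The main obstacle I anticipate is making the splicing step actually preserve $\theta$ on the newly-adjacent consecutive pair: when we remove an interior segment $\vec e_i,\dots,\vec e_j$ and glue $\vec e_{i-1}$ directly to $\vec e_{j+1}$, we need $\mA[\vec e_{i-1}/\vec X,\vec e_{j+1}/\vec X']\models\theta$, and this is not automatic from $\mA[\vec e_{i-1}/\vec X,\vec e_i/\vec X']\models\theta$ alone. The fix is to choose the repeated invariant carefully: it must record not just the quantifier-free type of $\vec e_i$ but enough about how $\vec e_i$ sits relative to $\vec e_{i-1}$ (the difference/similarity sets and the joint quantifier-free type of the pair), and symmetrically for $\vec e_j$ relative to $\vec e_{j+1}$. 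Equality of invariants at positions $i$ and $j$ then yields an isomorphism between the relevant finite substructures that fixes $\vec e_{i-1}$ and maps $\vec e_i\mapsto\vec e_j$; transporting the existential witnesses for $\theta$ along this isomorphism gives witnesses showing $\mA[\vec e_{i-1}/\vec X,\vec e_{j+1}/\vec X']\models\theta$ up to the part of $\theta$ that only sees $\vec e_{i-1}$ and the image — and since $\theta$ is existential, seeing a correct subvaluation on a substructure suffices. I would carry this out by induction on the path length, each time removing one loop, until the length is at most $k$.

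A cleaner alternative, which I would mention as a remark, is to avoid explicit pumping and instead invoke a locality/preservation argument: $[\TC_{\vec X,\vec X'}\theta](\vec Y,\vec Y')$ with $\theta\in\EFO$ is itself monotone/preserved under extensions in a way that forces the fixpoint defining $\TC(\B)$ to be reached after boundedly many stages, since each stage adds only pairs whose ``new content'' is bounded by the $q$ witnesses of $\theta$; once no invariant can be fresh, no new pairs appear. Either route gives the uniform $k$, and then $[\TC_{\vec X,\vec X'}\theta](\vec Y,\vec Y')\equiv[\TC^k_{\vec X,\vec X'}\theta](\vec Y,\vec Y')$ as claimed.
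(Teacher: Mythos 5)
Your high-level plan coincides with the paper's: fix the witnesses $\vv{a}_0,\dots,\vv{a}_{k_0-1}$ for the $n$ existential quantifiers of $\theta$ at each step of a $\B$-path, pigeonhole on their quantifier-free types and difference/similarity data, and shorten the path, obtaining a bound $k$ depending only on $\theta$ and the vocabulary. The gap is exactly at the ``main obstacle'' you name, and your proposed fix does not close it. You want equality of invariants at positions $i$ and $j$ to yield $\mA[\vec e_{i-1}/\vec X,\vec e_{j+1}/\vec X']\models\theta$ by transporting witnesses along an isomorphism that ``fixes $\vec e_{i-1}$ and maps $\vec e_i\mapsto\vec e_j$.'' But the invariant only constrains the restrictions of the relations to the finitely many witness points; $\vec e_{i-1}$ and $\vec e_j$ are arbitrary global relations, and to reuse the witness certifying $(\vec e_j,\vec e_{j+1})\in\B$ you would need $\vec e_{i-1}$ to agree with $\vec e_j$ on that witness tuple, which nothing in the construction guarantees. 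Concretely, for $\theta=\exists x\,(X(x)\land\neg X'(x))$ the direct edge from a segment's left endpoint to its right endpoint can simply fail to be in $\B$ no matter what types the interior witnesses share, so a one-step splice is not available in general.

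What the paper does instead --- and what is missing from your sketch --- is to replace the excised segment by \emph{two} steps through a freshly built hybrid relation $\vec B$: on tuples drawn from the witness $\vv{a}_{i_0-1}$ it copies $\vec B_{i_0}$, on tuples drawn from the witness $\vv{a}_{i_j}$ it copies $\vec B_{i_j}$, and it is empty elsewhere. Because $\theta$ is existential, each of the two old witnesses still certifies its step against this new relation. The price is that $\vec B$ must be well defined on tuples lying in both witness sets, which is why two same-invariant positions do not suffice: the paper first extracts $n+2$ mutually pairwise $I$-compatible witnesses of a common quantifier-free type (a sunflower with core $I$), and then uses that $\vv{a}_{i_0-1}$ has only $n$ entries to pick a petal compatible with it outside $I$. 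These two ingredients --- the hybrid intermediate relation and the sunflower extraction --- are the substance of the proof. Your ``cleaner alternative'' via monotone stabilization also does not work as stated: the number of tuples of relations is exponential in $\lvert A\rvert$, so new pairs can enter $\B^{(m)}$ at unboundedly many stages; only membership of the fixed pair $(\vec Y,\vec Y')$ stabilizes early, and showing that is precisely the path-shortening argument above.
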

\begin{proof}
Let $\theta = \exists x_1\dots \exists x_n \psi$, where $\psi$ is quantifier-free, and let $\tau$ denote the vocabulary of $\varphi$.
We will show that for large enough $k$ and for all $\tau$-structures $\mA$
\[
\mA \models [\TC_{\vec{X}, \vec{X'}} \theta] (\vec{Y}, \vec{Y'}) \text{ iff } \mA \models [\TC^k_{\vec{X}, \vec{X'}} \theta] (\vec{Y}, \vec{Y'}).
\]
From here on we consider $\tau$ and $\varphi$ fixed; especially, by a constant, we mean a number that is independent of the model $\mA$; that is, it may depend on $\tau$ and $\varphi$.

 It suffices to show the left-to-right direction as the converse direction holds trivially for all $k$. Assume that $\mA \models  [\TC_{\vec{X}, \vec{X'}} \theta] (\vec{Y}, \vec{Y'})$. By the semantics of $\TC$ there exists a natural number $k_0$ and tuples of relations $\vec{B}_0,\dots, \vec{B}_{k_0}$ on $A$ such that $\vec{B}_0= \vec{Y}^\mA$, $\vec{B}_{k_0}= \vec{Y'}^\mA$, and
 \begin{equation}\label{eq:0}
\mA[\vec{B}_i/\vec{X},  \vec{B}_{i+1}/\vec{X'}]\models \theta, \text{ for } 0\leq i < k_0.
\end{equation}
For each $i< k_0$, let $\mA_i \dfn \mA[\vec{B}_i/\vec{X},  \vec{B}_{i+1}/\vec{X'}]$ and let $\sigma$ denote the vocabulary of $\mA_i$.
By the semantics of the existential quantifier, \eqref{eq:0} is equivalent to saying that 
\begin{equation}\label{eq:1}
\mA_i[\vv{a_i} / x_1,\dots,x_n]\models \psi, \text{ for } 0\leq i < k_0,
\end{equation}
for some $n$-tuples $\vv{a_0},\dots \vv{a_{k_0-1}}$ from $A$.
We will prove the following claim.

\noindent\textbf{Claim:} There exists an index set $I$ and $n+2$ mutually pairwise $I$-compatible sequences in $\vv{a_1},\dots \vv{a_{k_0-1}}$ that have a common $\sigma$-type provided that $k_0$ is a large enough constant.

\noindent\textbf{Proof of the claim:} Let $\vv{c}^0=(\vv{c_0}^0,\vv{c_2}^0,\dots,\vv{c_t}^0)$ denote the longest (not necessarily consecutive) subsequence  of $\vv{a_1},\dots \vv{a_{k_0-1}}$ that have a common $\sigma$-type. Since there are only finitely many $\sigma$-types, $t$ can be made as large as needed by making $k_0$ a large enough constant. 

We will next show that there exists $n+2$ mutually pairwise $I$-compatible sequences in $\vv{c}^0$ for some $I$ (provided that $t$ is large enough). Set $\SIM_0:=\emptyset$. In the construction below we maintain the following properties for $0\leq i \leq n$:
\begin{itemize}
\item For each $j\in \SIM_i$ and for each tuple $\vv{a}$ and $\vv{b}$ in $\vv{c}^i$ it holds that $\vv{a}[j]=\vv{b}[j]$.
\item The length of $\vv{c}^i$ is as long a constant as we want it to be.
\end{itemize}
For $l < n$, let $\vv{b}^l_0,\dots,\vv{b}^l_{t_l}$ be a maximal collection (in length) mutually pairwise $\mathrm{SIM}_l$-compatible sequences from $\vv{c}^l$. If $t_l\geq n+1$ we are done. Otherwise note that, since each $\vv{b}^l_j$ is an $n$-tuple, the number of different points that may occur in $\vv{b}^l_0,\dots,\vv{b}^l_{t_l}$ is $\leq n^2+n$.
By an inductive argument we may assume that the length of $\vv{c}^{l}$ is as large a constant as we want, and thus we may conclude that there exists an index $i\notin \SIM_l$ and an element $d_l$ such that there are as many as we want tuples $\vv{c}^l_j$ in  $\vv{c}^l$ such that $\vv{c}^l_j[i]=d_l$. Set $\SIM_{l+1} := \SIM_l \cup \{ i \}$ and let $\vv{c}^{l+1}$ be the sequence of exactly those  $\vv{a}\in\vv{c}^l$ such that $\vv{a}[i]=d_l$. Notice that the length of $\vv{c}^{l+1}$ is as large a constant as we want it to be.

Finally, the case $l=n$. Note that $\mathrm{SIM}_n=\{0,\dots,n-1\}$ and $\vv{c}^n$ is a sequence of $n$-tuples; in fact all tuples in $\vv{c}^n$ are identical.  Thus, if the length of $\vv{c}^n$ is at least $n+2$, the first $n+2$ sequences of  $\vv{c}^n$  constitute a mutually pairwise $\mathrm{SIM}_n$-compatible sequence of length $n+2$. It is now straightforward but tedious to check how large $k_0$ has to be so that the length of $\vv{c}^n$ is at least $n+2$; thus the claim holds. \qed

Now let $\vv{a}_{i_0},\dots,\vv{a}_{i_{n+1}}$, $0 < i_0 <\dots <i_{n+1}$, be mutually pairwise $I$-compatible sequences from $\vv{a}_1,\dots \vv{a}_{k_0-1}$ with a common $\sigma$-type provided by the Claim.
Let $1\leq j \leq n+1$ be an index such that $\vv{a}_{i_0-1}$ and $\vv{a}_{i_j}$ are pairwise compatible outside $I$ and $\simi(\vv{a}_{i_0-1}, \vv{a}_{i_j})\subseteq I$.
It is straightforward to check that such a $j$ always exists, for if $\vv{a}_{i_0-1}$ and $\vv{a}_{i_{j'}}$ are not pairwise compatible outside $I$ or $\simi(\vv{a}_{i_0-1}, \vv{a}_{i_j'})\not\subseteq I$, there exists some indices ${m},{m'}\notin I$ such that $\vv{a}_{i_0-1}[m]=\vv{a}_{i_{j'}}[m']$, and for each such $\vv{a}_{i_{j'}}$ the value of the related $\vv{a}_{i_{j'}}[m']$ has to be unique as  $\vv{a}_{i_1},\dots,\vv{a}_{i_{n+1}}$ are mutually pairwise $I$-compatible. Now $j$ must exist since the length of $\vv{a}_{i_1},\dots,\vv{a}_{i_{n+1}}$ is $n+1$ while the length of $\vv{a}_{i_0-1}$ is only $n$.

Consider the models $\mA_{i_{0}-1} = \mA[\vec{B}_{i_{0}-1}/ \vec{X}, \vec{B}_{i_{0}}/ \vec{X'}]$ and $\mA_{i_{j}} = \mA[\vec{B}_{i_{j}}/ \vec{X}, \vec{B}_{i_{j+i}}/ \vec{X'}]$ and recall that
\(
\mA_{i_{0}-1}[\vv{a}_{i_0-1} / x_1,\dots, x_n] \models \psi \text{ and } \mA_{i_{j}}[\vv{a}_{i_j} / x_1,\dots, x_n]\models \psi.
\)
We claim that there exists a sequence $\vec{B}$ of relations on $A$ such that
\begin{equation}\label{modelswap}
\mA[\vec{B}_{i_{0}-1}/ \vec{X}, \vec{B}/ \vec{X'}, \vv{a}_{i_0-1} / x_1,\dots, x_n] \models \psi \text{ and } 
\mA[\vec{B}/ \vec{X}, \vec{B}_{i_{j+i}}/ \vec{X'}, \vv{a}_{i_j} / x_1,\dots, x_n]\models \psi.
\end{equation}
and thus that $\mA[\vec{B}_{i_{0}-1}/ \vec{X}, \vec{B}/ \vec{X'}]\models \theta$ and $\mA[\vec{B}/ \vec{X}, \vec{B}_{i_{j+i}}/ \vec{X'}]\models \theta$.
From this the claim of the theorem follows for $k=k_0$.

It now suffices to show that such a $\vec{B}$ exists. The idea is that $\vec{B}$ looks exactly like $\vec{B}_{i_{0}}$ with respect to points in $\vv{a}_{i_0-1}$ and like $\vec{B}_{i_{j}}$ with respect to points  $\vv{a}_{i_j}$. Formally $\vec{B}$ is defined as follows.
For every relation $\vec{B}[m]$ and tuple $\vv{a}\in A^{\ar{\vec{B}[m]}}$
\begin{itemize}
\item if $\vv{a}$ is completely included in neither $\vv{a}_{i_0-1}$ nor $\vv{a}_{i_j}$ then we set $\vv{a}\notin \vec{B}[m]$,
\item if $\vv{a}$ is completely included in $\vv{a}_{i_0-1}$ then we set $\vv{a}\in \vec{B}[m] $ iff $\vv{a}\in \vec{B}_{i_0}[m]$,
\item if $\vv{a}$ is completely included in $\vv{a}_{i_j}$ then we set $\vv{a}\in \vec{B}[m]$ iff $\vv{a}\in \vec{B}_{i_j}[m]$.
\end{itemize}
Note that if $\vv{a}=(a_1,\dots,a_m)$ is completely included in both $\vv{a}_{i_0-1}$ and $\vv{a}_{i_j}$ then there exists indices $j_1,\dots j_m \in I$ such that, for $1\leq l \leq m$, $a_l=\vv{a}_{i_j}[j_l]=\vv{a}_{i_0}[j_l]$. The former equality follows, with indices in $I$, since $\vv{a}_{i_0-1}$ and $\vv{a}_{i_j}$ are pairwise compatible outside $I$ and $\simi(\vv{a}_{i_0-1}, \vv{a}_{i_j})\subseteq I$. The latter equality follows since $\vv{a}_{i_0}$ and $\vv{a}_{i_j}$ are pairwise $I$-compatible. Since $\vv{a}_{i_0}$ and $\vv{a}_{i_j}$ have the same $\sigma$-type $\vv{a}\in \vec{B}_{i_0}[m]$ iff $\vv{a}\in \vec{B}_{i_j}[m]$, for all $m$, and thus $\vec{B}$ is well-defined. It is now immediate that \eqref{modelswap} holds.
\end{proof}

\begin{lemma}\label{inductionbasis}
For every formula of vocabulary $\tau$ of the form $\exists X \theta$ or $[\TC_{\vec{X}, \vec{X'}} \theta] (\vec{Y}, \vec{Y'})$, where $\theta\in\EFO$ and $\vec{X}$, $\vec{X'}$, $\vec{Y}$, $\vec{Y'}$ are tuples of relation variables, there exists an equivalent formula $\varphi\in\EFO$ of vocabulary $\tau$.  
\end{lemma}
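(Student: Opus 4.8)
The plan is to treat the two forms separately, establishing the case of $\exists X\,\theta$ first and then reducing the transitive‑closure case to it by means of Lemma~\ref{tctotck}.

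\textbf{The case $\exists X\,\theta$.} Bring $\theta$ into prenex form $\exists x_1\cdots\exists x_n\,\psi$ with $\psi$ quantifier free. The atomic subformulas of $\psi$ that mention $X$ all have the shape $X(t_1,\dots,t_{\ar X})$ with each $t_i$ among $x_1,\dots,x_n$ or a first‑order variable of $\tau$, so there are only finitely many of them, say $X(\vec t_1),\dots,X(\vec t_p)$, and $p$ depends only on $\psi$ and $\tau$. For a truth assignment $\nu\colon\{1,\dots,p\}\to\{\top,\bot\}$ let $\psi_\nu$ arise from $\psi$ by replacing each $X(\vec t_l)$ by $\nu(l)$ (coded, say, as $x_1=x_1$ or $x_1\neq x_1$), and let $\mathrm{cons}_\nu$ be the conjunction, over all pairs $l,l'$ with $\nu(l)\neq\nu(l')$, of $\bigvee_{i}\vec t_l[i]\neq\vec t_{l'}[i]$, i.e.\ the assertion that $\vec t_l$ and $\vec t_{l'}$ denote distinct tuples. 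I claim
\[
\exists X\,\theta\ \equiv\ \exists x_1\cdots\exists x_n\bigvee_{\nu}\bigl(\psi_\nu\wedge\mathrm{cons}_\nu\bigr),
\]
and the right‑hand side is an $\EFO$‑formula of vocabulary $\tau$. For the nontrivial inclusion one checks that, once $x_1,\dots,x_n$ have been assigned, an interpretation of $X$ realising the truth pattern $\nu$ on $\vec t_1,\dots,\vec t_p$ exists precisely when $\nu$ respects the identifications forced among these tuples, which is exactly what $\mathrm{cons}_\nu$ guarantees; conversely, any interpretation of $X$ witnessing $\exists X\,\theta$ induces a suitable consistent $\nu$. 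Note that the outermost $\exists X$ commutes with the inner $\exists\vec x$, so placing the "choice of $X$'' (via $\bigvee_\nu$) inside $\exists\vec x$ is harmless.

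\textbf{The transitive‑closure case.} By Lemma~\ref{tctotck}, $[\TC_{\vec X,\vec X'}\theta](\vec Y,\vec Y')$ is equivalent to $[\TC^k_{\vec X,\vec X'}\theta](\vec Y,\vec Y')$ for some constant $k$, and unrolling the $k$‑bounded closure gives the equivalent formula
\[
\bigvee_{j=1}^{k}\ \exists\vec Z_1\cdots\exists\vec Z_{j-1}\Bigl(\theta[\vec Y/\vec X,\vec Z_1/\vec X']\wedge\theta[\vec Z_1/\vec X,\vec Z_2/\vec X']\wedge\cdots\wedge\theta[\vec Z_{j-1}/\vec X,\vec Y'/\vec X']\Bigr),
\]
where each $\vec Z_i$ is a fresh tuple of second‑order variables of the same sort as $\vec X$ (for $j=1$ the body is just $\theta[\vec Y/\vec X,\vec Y'/\vec X']$). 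Each conjunct is obtained from $\theta\in\EFO$ by substituting relation variables for relation variables and is hence again in $\EFO$; since $\EFO$ is closed under conjunction (rename the first‑order quantifiers apart and pull them to the front) and under disjunction over nonempty structures, it suffices to rewrite each disjunct $\exists\vec Z_1\cdots\exists\vec Z_{j-1}\,\vartheta_j$, with $\vartheta_j\in\EFO$, as an $\EFO$‑formula. This follows by stripping the second‑order quantifiers $\exists Z$ one at a time and applying the $\exists X$‑case established above, the inner formula remaining in $\EFO$ throughout. Collecting the disjuncts yields the desired $\EFO$‑formula of vocabulary $\tau$.

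\textbf{Main obstacle.} The only genuinely delicate point is the $\exists X$‑case, and within it the treatment of negative occurrences of $X$: if $X$ occurred only positively one could simply replace every $X$‑atom by $\top$ (equivalently, substitute the full relation $A^{\ar X}$) and be done, but in general $\psi$ may contain atoms $\neg X(\vec t)$, so the existential second‑order quantifier must be simulated by the finite disjunction over truth patterns together with the consistency constraints $\mathrm{cons}_\nu$. Pinning these constraints down correctly — and verifying that a witnessing interpretation of $X$ really does exist whenever $\mathrm{cons}_\nu$ holds under the chosen first‑order assignment — is the crux; the rest ($\TC^k$‑unrolling, closure of $\EFO$ under $\wedge$ and $\vee$ over nonempty structures, the peeling‑off induction) is routine bookkeeping.
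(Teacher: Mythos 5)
Your proof is correct, and it reaches the conclusion by a genuinely different route in the key $\exists X$ case. The paper exploits the same finiteness phenomenon from the other side: letting $k$ be the number of occurrences of $X$ in $\theta$, it observes that a witnessing relation can always be shrunk to one of size at most $k$ (namely, the restriction of $R$ to the tuples actually denoted by the arguments of the $X$-atoms under the chosen first-order witnesses), and therefore replaces $\exists X$ by $k$ existentially quantified $\ar{X}$-tuples, substituting each atom $X(\vv{x})$ by $\bigvee_i \vv{x}=\vv{x}_i$, with a separate disjunct $\theta_\emptyset$ handling the empty relation. Your version instead guesses the truth pattern $\nu$ on the $X$-atoms and certifies realizability by the consistency constraints $\mathrm{cons}_\nu$; this is the dual view, and you correctly identify that the only delicate point is that $\nu$ must respect coincidences among the argument tuples (which is exactly what makes negative occurrences of $X$ harmless in both arguments). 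The paper's encoding is more succinct (a single formula with $k\cdot\ar{X}$ extra quantifiers rather than $2^p$ disjuncts) and avoids an explicit case analysis, while yours makes the correctness argument more transparent and needs no special treatment of the empty relation. Your handling of the $\TC$ case --- invoking Lemma~\ref{tctotck}, unrolling $\TC^k$ into a disjunction over path lengths with fresh intermediate relation tuples, and then peeling off the second-order quantifiers one at a time via the $\exists X$ case --- is essentially identical to the paper's, which packages the same unrolling as $\exists X_1\dots\exists X_{k-1}\bigvee_{0\le n<k}(\theta^{\mathrm{end}}_n\land\bigwedge_{1\le i\le n}\theta^{\mathrm{move}}_i)$.
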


\begin{proof}
Consider first the formula $\exists X \theta$. Define $n\dfn \ar{X}$ and let $k$ be the number of occurrences of $X$ in $\theta$. The idea behind our translation is that the quantification of $X$ can be equivalently replaced by a quantification of an $n$-ary relation of size $\leq k$; this can be then expressed in $\EFO$ by quantifying $k$ many $n$-tuples (content of the finite relation).

Let $\theta_\emptyset$ denote the formula obtained from $\theta$ by replacing every occurrence of the relation variable $X$ of the form $X(\vec{x})$ in $\theta$ by the formula $\exists x (x\neq x)$. Define
\(
\gamma \dfn \exists \vv{x}_1 \dots \exists \vv{x}_k  (\theta_\emptyset \lor \theta'),
\)
where, for each i, $\exists \vv{x}_i$ is a shorthand for $\exists {x}_{1,i}\dots \exists {x}_{n,i}$ and $\theta'$ is the formula obtained from $\theta$ by substituting each occurrence of the relation variable $X$ of the form $X(\vv{x})$ in $\theta$ by 
\(
\bigvee_{1\leq i\leq n} (\vv{x}= \vv{x}_i).
\)
It is straightforward to check that $\gamma$ is an $\EFO$-formula of vocabulary $\tau$ equivalent with $\exists X \theta$.

Consider then the formula $\varphi=[\TC_{\vec{X}, \vec{X'}} \theta] (\vec{Y}, \vec{Y'})$. In order to simplify the presentation, we stipulate that $\vec{X}$ and $\vec{X'}$ are of length one, that is, variables $X$ and $X'$, respectively; the generalisation of the proof for arbitrary tuples of second-order variables is straightforward.
By Lemma \ref{tctotck}, we obtain $k\in\N$ such that $\varphi$ and $\varphi' \dfn [\TC^k_{X, X'} \theta] (Y, Y')$ are equivalent.

The following formulas are defined via substitution; by $\theta(A/B)$ we denote the formula obtained from $\theta$ by substituting each occurrence of the symbol $B$ by the symbol $A$. 
\begin{itemize}
\item $\theta^{\mathrm{end}}_0 \dfn \theta(Y/X, Y'/X')$ and $\theta^{\mathrm{end}}_i \dfn \theta(X_i/X, Y'/X')$, for  $1 \leq i < k$,
\item $\theta^{\mathrm{move}}_1 \dfn \theta(Y/X, X_1/X')$ and $\theta^{\mathrm{move}}_i  \dfn \theta(X_{i-1}/X, X_{i}/X')$, for $2 \leq i < k$.
\end{itemize}
Let $\psi$ denote the following formula of existential second-order logic
\[
\exists X_1 \dots \exists X_{k-1} \; \bigvee_{0\leq n < k} ( \theta^{\mathrm{end}}_n \land \bigwedge_{1\leq i \leq n} \theta^{\mathrm{move}}_i ).
\]
It is immediate that $\varphi'$ and $\psi$ are equivalent. Note that $\psi$ is of the form $\exists X_1 \dots \exists X_{k-1} \psi'$, where $\psi'$ is an $\EFO$-formula. By repetitively applying the first case of this lemma to subformulas of $\psi$, we eventually obtain an equivalent $\EFO$-formula over $\tau$ as required.
\end{proof}

\begin{theorem}
The expressive powers of $\EPSOTC$ and $\EFO$ coincide.
\end{theorem}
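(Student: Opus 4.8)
The plan is to prove the two inclusions separately, the forward one being trivial and the backward one being a structural induction built on top of Lemmas~\ref{tctotck} and~\ref{inductionbasis}. That $\EFO$ is subsumed by $\EPSOTC$ is immediate: an $\EFO$-formula contains no second-order quantifiers and no $\TC$-operators, and its existential quantifiers trivially occur positively, so it is already syntactically an $\EPSOTC$-formula. Hence the content of the theorem is the converse, that $\EPSOTC$ is subsumed by $\EFO$.

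First I would bring an arbitrary $\EPSOTC$-formula into negation normal form. By definition of $\EPSOTC$, every existential quantifier and every $\TC$-operator lies within the scope of an even number of negations; so, processing negations from the root, one never needs to push a negation past an $\exists x$, an $\exists X$, or a $\TC$-operator, since that would witness an occurrence under an odd number of negations. The remaining De~Morgan steps act only on quantifier-free, $\TC$-free material and drive all negations down to atoms. Consequently every $\EPSOTC$-formula is equivalent to one built from literals ($x=y$, $x\neq y$, $X(\vec x)$, $\neg X(\vec x)$) using $\land$, $\lor$, $\exists x$, $\exists X$, and positive applications of $\TC$ to tuples of second-order variables, and --- crucially for the induction --- every subformula of such a formula is again of this shape.

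Then I would induct on the structure of a formula $\varphi$ of this form, with the claim that whenever $\FV(\varphi)\subseteq\tau$ there is an equivalent $\EFO$-formula over $\tau$. Literals are already $\EFO$-formulas. For $\varphi=\psi_1\land\psi_2$, $\varphi=\psi_1\lor\psi_2$, and $\varphi=\exists x\,\psi$, apply the induction hypothesis to the immediate subformulas and recombine inside $\EFO$. For $\varphi=\exists X\,\psi$, the hypothesis gives an equivalent $\psi'\in\EFO$ over $\tau$, whence $\varphi\equiv\exists X\,\psi'$, and the first case of Lemma~\ref{inductionbasis} turns this into an $\EFO$-formula over $\tau$. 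For $\varphi=[\TC_{\vec X,\vec X'}\psi](\vec Y,\vec Y')$, apply the hypothesis to $\psi$ over the vocabulary $\tau\cup\vec X\cup\vec X'$ to get an equivalent $\psi'\in\EFO$, so that $\varphi\equiv[\TC_{\vec X,\vec X'}\psi'](\vec Y,\vec Y')$, and the second case of Lemma~\ref{inductionbasis} (which internally invokes Lemma~\ref{tctotck} to replace the unbounded $\TC$ by a bounded $\TC^{k}$) supplies the required $\EFO$-formula over $\tau$.

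The genuinely hard step has already been isolated and is not part of this argument: it is the stabilization, proved in Lemma~\ref{tctotck}, of a $\TC$ over an existential first-order matrix after a constant number of iterations. Granting that, the induction above is routine bookkeeping; the two things one must be careful about are that positivity of $\exists$ and $\TC$ is inherited by subformulas, so that the induction hypothesis is applicable, and that Lemma~\ref{inductionbasis} preserves the ambient vocabulary, so that all free variables --- including second-order variables bound by operators occurring further out in $\varphi$ --- are handled correctly throughout.
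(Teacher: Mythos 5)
Your proof is correct and follows essentially the same route as the paper: both arguments reduce everything to Lemma~\ref{inductionbasis} by eliminating innermost second-order quantifiers and $\TC$-operators from the inside out (the paper phrases this as induction on the combined nesting depth with simultaneous substitution, you as a structural induction). Your explicit negation-normal-form step usefully makes precise a point the paper leaves implicit, namely that an $\EPSOTC$-formula with no second-order quantifiers and no $\TC$-operators is indeed equivalent to an $\EFO$-formula.
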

\begin{proof}
Every $\EFO$-formula is also an $\EPSOTC$-formula, and thus it suffices to establish the converse direction. The proof proceeds via induction on the combined nesting depth of second-order quantifiers and $\TC$ operators. For every formula $\varphi$ of the form $\exists X \theta$ or $[\TC_{\vec{X}, \vec{X'}} \theta] (\vec{Y}, \vec{Y'})$, where $\theta\in\EFO$, let $\varphi^*$ denote the equivalent $\EFO$-formula obtained from Lemma \ref{inductionbasis}.

Let $\psi$ be an arbitrary $\EPSOTC$-formula with combined nesting depth $k$ of second-order quantifiers and $\TC$ operators. Let $\psi'$ be the $\EPSOTC$-formula obtained from $\psi$ by simultaneously substituting each of its subformula $\varphi$ of the form $\exists X \theta$ or $[\TC_{\vec{X}, \vec{X'}} \theta] (\vec{Y}, \vec{Y'})$, where $\theta\in\EFO$, by its $\EFO$ translation $\varphi^*$. Clearly $\psi$ and $\psi'$ are equivalent, and the combined nesting depth of second-order quantifiers and $\TC$ operators in $\psi'$ is $k-1$. Thus, by induction, the claim follows, as for $k=0$ the formula is already in $\EFO$.
\end{proof}

\section{MSO(TC) and counting}\label{sec:counting}
We define a counting extension of $\MSO(\TC)$ and show that the extension does
not add  expressive power to the logic.
In this way, we demonstrate that quite a bit of queries involving
counting can be expressed already in $\MSOTC$.

\subsection{Syntax and semantics of CMSO(TC)} \label{sec:syntaxcounting}
We assume a sufficient supply of \emph{counter variables} or simply  \emph{counters}, which are a new
sort of variables. We use the Greek letters $\mu$ and $\nu$ (with subscripts) to denote counter variables.
The notion of a vocabulary is extended so that
it may also contain counters.  A structure $\mA$ over a
vocabulary $\tau$ is defined to be a pair $(A,I)$ as before, where
$I$ now also maps the
counters in $\tau$ to elements of $\{0,\dots,n\}$,
where $n$ is the cardinality of $A$.

We also assume a sufficient supply of \emph{numeric predicates}. Intuitively numeric predicates are relations over natural numbers such as the tables of multiplication and addition. Technically, we use an approach similar to generalised quantifiers; a $k$-ary numeric predicate is a class $Q_p\subseteq \N^{k+1}$ of $k+1$-tuples of natural numbers. For a numeric predicate $Q_p$, we use $p$ as a symbol referring to the predicate. For simplicity, we often call $p$ also numeric predicate. Note that when evaluating a $k$-ary numeric predicate $p(\mu_1,\dots,\mu_k)$ on a finite structure $\mA$, we let the numeric predicate $Q_p$ access also the cardinality of the structure in question, and thus $Q_p$ consists of $k+1$-tuples and not $k$-tuples. This convention allows us, for example, to regard the modular sum $a+b \equiv c \,(\mathrm{mod}\, n)$, where $n$ refers to the cardinality of the structure, as a $3$-ary numeric predicate.

We consider only those numeric predicates which can be decided in $\NL$.
Since, on finite ordered structures, first-order transitive closure logic captures $\NL$, this boils down to being definable in first-order transitive closure logic when the counter variables are interpreted as points in an ordered structure representing an initial segment of natural numbers (see Definition \ref{def:NLpredicate} and Proposition \ref{prop:defining} below for precise formulations).

\begin{definition}
The syntax of $\CMSOTC$ extends the syntax of $\MSOTC$ as follows:
\begin{itemize}
\item
Let $\varphi$ be a formula, $\mu$ a counter, and $x$
a first-order variable.  Then $\mu = \#\{x \mid \varphi\}$ is also
a formula.  The set of its free variables is defined to be
$(\FV(\varphi) - \{x\}) \cup \{\mu\}$.
\item
If $\varphi$ is a formula and $\mu$ a counter then also $\exists
\mu \, \varphi$ is a formula with set of free variables
$\FV(\varphi) - \{\mu\}$.
\item
Let $\mu_1$, \dots, $\mu_k$ be counters and let $p$ be a $k$-ary numeric
predicate.  Then $p(\mu_1,\dots,\mu_k)$ is a formula with the set of free
variables $\{\mu_1,\dots,\mu_k\}$.
\item
  The scope of the transitive-closure operator is widened to
    apply as well to counters.  Formally, in a formula of the
    form $[\TC_{\vec{X},\vec{X'}}\varphi](\vec{Y},\vec{Y'})$, the
    variables in $\vec{X}$, $\vec{X'}$, $\vec{Y}$, and $\vec{Y'}$ may also include
    counters.  We still require that the tuples
    $\vec{X}$, $\vec{X'}$, $\vec{Y}$, and $\vec{Y'}$
    have the same sort, i.e., if a counter appears in some
    position in one of these tuples then a counter must appear in
    that position in each of the tuples.
\end{itemize}
\end{definition}
\begin{definition}
The satisfaction relation, $\mA\models \psi$, for $\CMSOTC$ formulas $\psi$ and structures $\mA=(A,I)$ over a vocabulary appropriate for $\psi$ is defined in the same way as for $\MSOTC$ with the following additional clauses.
\begin{itemize}
\item
Let $\psi$ be of the form $\exists \mu \, \varphi$, where $\mu$ is a
counter, and  let $n$ denote the cardinality of $A$.
Then $\mA \models \psi$ iff there exists a number $i \in
\{0,\dots,n\}$ such that $\mA[i/\mu] \models \varphi$.
\item
Let $\psi$ be of the form $\mu = \#\{x \mid \varphi\}$.
Then $\mA \models \psi$ iff $I(\mu)$ equals the
cardinality of the set $ \{a \in A \mid \mA[a/x]
\models \varphi\}.$
\item 
Let $\psi$ be of the form $p(\mu_1,\dots,\mu_k)$, where
$\mu_1$, \dots, $\mu_k$ are counters and $p$ is a $k$-ary numeric
predicate. Then
\(
\mA \models p(\mu_1,\dots,\mu_k) \text{ iff }   \big(\lvert A\rvert, I(\mu_1),\dots,I(\mu_k)\big) \in Q_p.
\)
\end{itemize}
\end{definition}

\begin{definition}\label{def:NLpredicate}
A $k$-ary numeric predicate $Q_p$ is \emph{decidable in $\NL$} if the
membership
$(n_0,\dots,n_k)\in Q_p$ can be decided by a nondeterministic
Turing machine that uses logarithmic space when the numbers
$n_0,\dots,n_k$ are given in unary. Note that this is equivalent
to linear space when $n_0,\dots,n_k$ are given in binary.
\end{definition}
From now on we restrict our attention to  numeric predicates that are decidable in $\NL$.
The following proposition follows directly from a result of Immerman (Theorem \ref{thm:fotc}) that, on ordered structures,
$\FO(\FTC)$ captures $\NL$.
\begin{proposition}\label{prop:defining}
For every $k$-ary numeric predicate $Q_p$ decidable in $\NL$ there exists a formula $\varphi_p$ of $\FO(\FTC)$ over $\{s, x_1,\dots,x_k\}$, where $s$ is a binary second-order variable and $x_1,\dots,x_k$ are first-order variables, s.t. for all appropriate structures $\mA$ for $p(\mu_1,\dots,\mu_k)$
\[
\mA \models p(\mu_1,\dots,\mu_k) \text{ iff }    \big(\lvert A\rvert, I(\mu_1),\dots,I(\mu_k)\big) \in Q_p  \text{ iff } (B,J)\models \varphi_p,  
\]
where $B=\{0,1,\dots, \lvert A \rvert\}$, $J(s)$ is the successor relation of $B$, and $J(x_i)=I(\mu_i)$, for $1\leq i \leq k$. 
\end{proposition}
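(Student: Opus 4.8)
The plan is to reduce the statement directly to Immerman's capturing theorem (Theorem~\ref{thm:fotc}); the only real work is to check that a suitable Boolean query on ordered structures lies in $\NL$.

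Observe first that the structure $(B,J)$ in the statement is essentially an \emph{ordered structure} over $\{s,x_1,\dots,x_k\}$: its domain $B=\{0,1,\dots,n\}$ has $n+1$ elements, where $n=\lvert A\rvert$, and $J(s)$ is the successor relation of the natural order on $B$. Since the linear order $\le$ is definable from $s$ in $\FO(\FTC)$ (as the reflexive transitive closure of $s$) and conversely $s$ is $\FO$-definable from $\le$, we may apply Theorem~\ref{thm:fotc} treating $(B,J)$ as carrying this order. Under it, an element $m\in B$ has exactly $m$ proper $s$-predecessors, so the position of $x_i$ equals $J(x_i)=I(\mu_i)$, and $\lvert B\rvert-1=\lvert A\rvert$. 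Hence it suffices to produce an $\FO(\FTC)$-sentence $\varphi_p$ over $\{s,x_1,\dots,x_k\}$ defining the class $\mathcal Q$ of ordered $\{s,x_1,\dots,x_k\}$-structures $(B,J)$ with $\big(\lvert B\rvert-1,\mathrm{pos}(x_1),\dots,\mathrm{pos}(x_k)\big)\in Q_p$, where $\mathrm{pos}(x_i)$ denotes the number of proper $s$-predecessors of $x_i$; such a $\varphi_p$ then satisfies all the required equivalences.

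By Theorem~\ref{thm:fotc}, $\FO(\FTC)$ captures $\NL$ on ordered structures, so it remains to argue $\mathcal Q\in\NL$. Given the standard encoding of $(B,J)$, a log-space transducer computes $n=\lvert B\rvert-1$ and each $n_i=\mathrm{pos}(x_i)$ — by locating the $s$-minimal element and walking along $s$ with a binary counter — and outputs the tuple $(n,n_1,\dots,n_k)$ in unary. This is a log-space many-one reduction of $\mathcal Q$ to the language $\{\,\langle n,n_1,\dots,n_k\rangle_{\mathrm{unary}} : (n,n_1,\dots,n_k)\in Q_p\,\}$, which is in $\NL$ by hypothesis on $Q_p$. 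As $\NL$ is closed under log-space reductions, $\mathcal Q\in\NL$, and the claim follows.

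The only point requiring care is this size bookkeeping: ``$\NL$ on unary inputs'' for $Q_p$ and ``$\NL$ in the size of $(B,J)$'' yield the same $\Theta(\log n)$ space budget, because the unary encoding of $(n,n_1,\dots,n_k)$ has length $\Theta(n)$ while the encoding of $(B,J)$ has length polynomial in $n$, so the reduction is genuinely log-space. Everything else — extracting the positions, composing with the $\NL$ decider — is routine.
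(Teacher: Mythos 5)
Your proof is correct and follows exactly the route the paper intends: the paper offers no written proof, stating only that the proposition ``follows directly'' from Immerman's capture of $\NL$ by $\FO(\FTC)$ on ordered structures (Theorem~\ref{thm:fotc}). Your write-up simply fills in the details the paper leaves implicit --- the interdefinability of successor and order in $\FO(\FTC)$, and the log-space bookkeeping showing that the relevant Boolean query on the structures $(B,J)$ is in $\NL$ given that $Q_p$ is $\NL$-decidable on unary inputs --- all of which is accurate.
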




\subsection{CMSO(TC) collapses to MSO(TC)}

Let $\tau$ be a vocabulary with counters. Let $\tau^*$ denote
the vocabulary without counters obtained from $\tau$ by viewing
  each counter variable of $\tau$ as a set variable.  Let $\mA=(A,I)$ be a structure over
  $\tau$, and let $\mB=(A,J)$ be a structure over $\tau^*$ with
  the same domain as $\mA$.  We say
  that $\mB$ \emph{simulates} $\mA$ if for every counter $\mu$ in
  $\tau$, the set $J(\mu)$ has cardinality $I(\mu)$, and $J(X)=I(X)$, for each first-order or second-order variable $X\in \tau$.
  Let $\varphi$ be a $\CMSOTC$-formula over $\tau$ and
  $\psi$ an $\MSOTC$ formula over $\tau^*$.  We say that
  $\psi$ \emph{simulates} $\varphi$ if whenever $\mB$ simulates
  $\mA$, we have that  $\mA \models \varphi$ if and only if $\mB \models \psi$.
%
%

Let $\varphi(x)$ and $\psi(y)$ be formulae of some logic. The \emph{H\"artig quantifier} is defined as follows:
\begin{align*}
\mA \models \Ha xy(\varphi(x),\psi(y)) \Leftrightarrow\; &\text{the sets $\{a \in A \mid \mA[a/x]\models \varphi\}$ and $\{b \in A \mid \mA[b/y]\models \psi\}$} \\
& \text{have the same cardinality }
\end{align*}
\begin{proposition}\label{prop:hartig}
The H\"artig quantifier  can be expressed in $\MSOTC$.
\end{proposition}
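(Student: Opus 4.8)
The plan is to express the Härtig quantifier $\Ha xy(\varphi(x),\psi(y))$ by building, inside $\MSOTC$, a transitive-closure computation that simultaneously "peels off" one witness of $\varphi$ and one witness of $\psi$ at each step, and checks that both sets run out at the same time. Concretely, let $P = \{a \mid \mA[a/x]\models\varphi\}$ and $Q = \{b \mid \mA[b/y]\models\psi\}$. I would use two monadic second-order variables $Z, Z'$ ranging over subsets of $A$ together with, in the same sort-tuple, two further monadic variables $W, W'$. A pair $((Z,W),(Z',W'))$ lies in the step relation $\B$ iff $Z' \subsetneq Z$ with $Z \setminus Z' = \{u\}$ a single element satisfying $\varphi(u)$, and likewise $W' \subsetneq W$ with $W\setminus W' = \{v\}$ a single element satisfying $\psi(v)$; "$Z$ and $Z'$ differ in exactly one element" is first-order expressible over set variables in the monadic setting. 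Then $\Ha xy(\varphi,\psi)$ holds iff the pair $\big((P,Q),(\emptyset,\emptyset)\big)$ — or, to avoid the zero-step issue with $\TC$, the pair with one element already removed — lies in $\TC(\B)$, since a $\B$-path from $(P,Q)$ to $(\emptyset,\emptyset)$ exists exactly when $|P| = |Q|$.

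The steps, in order, would be: (1) write an $\MSO$ formula $\mathrm{diff}_\varphi(Z,Z')$ asserting that $Z' \subseteq Z$, $Z\neq Z'$, there is a unique $u \in Z\setminus Z'$, and $\varphi(u)$ holds — and symmetrically $\mathrm{diff}_\psi(W,W')$; (2) combine these into the body $\chi(Z,W,Z',W') \dfn \mathrm{diff}_\varphi(Z,Z') \wedge \mathrm{diff}_\psi(W,W')$ of the $\TC$; (3) pin down $P$ and $Q$ as the extensions of $\varphi$ and $\psi$ using set quantification plus $\forall$ (this uses the full $\MSO$, which is allowed here, not the existential fragment of Section~\ref{sec:epo}); (4) handle the boundary cases $|P|=|Q|=0$ and $|P|=|Q|=1$ separately by a plain first-order/$\MSO$ disjunct, since $\TC$ as defined requires $n>0$ steps; (5) assemble the final formula
\[
\Ha xy(\varphi,\psi) \equiv (P=\emptyset \wedge Q=\emptyset) \ \lor\ \exists Z\exists W\big(\mathrm{"}Z=P\mathrm{"}\wedge\mathrm{"}W=Q\mathrm{"}\wedge [\TC_{Z,W,Z',W'}\chi](Z,W,\emptyset,\emptyset)\big),
\]
where the two empty-set arguments on the right of the $\TC$ are supplied by fresh set variables constrained to be empty, and "$Z=P$" abbreviates $\forall u(Z(u)\leftrightarrow\varphi(u))$.

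The main obstacle, and the only genuinely non-routine point, is the interaction between the step relation and the $\TC$ semantics: one must verify that a $\B$-path from $(P,Q)$ down to $(\emptyset,\emptyset)$ exists \emph{if and only if} $|P|=|Q|$. The "if" direction is clear — remove elements in any order, one from each side per step. For "only if", note that $\B$ strictly decreases both $|Z|$ and $|W|$ by exactly one at every step, so any path from $(P,Q)$ to $(\emptyset,\emptyset)$ has length exactly $|P|$ and exactly $|Q|$, forcing $|P|=|Q|$; crucially $\B$ never allows $Z$ or $W$ to grow or to reach $\emptyset$ prematurely on one side while the other is nonempty, because the body demands a witness is removed from \emph{both}. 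A secondary, purely bookkeeping concern is making sure the sort requirement of the $\TC$ operator (all of $\vec X,\vec X',\vec Y,\vec Y'$ the same sort) is met — here every component is a monadic set variable, so this is immediate — and that the "exactly one element difference" predicate is written without counters, which it can be since it is just $\exists u\,\forall v\,(v\in Z\wedge v\notin Z' \leftrightarrow v=u)$ conjoined with $\forall v(v\in Z'\rightarrow v\in Z)$.
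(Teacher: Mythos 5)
Your proposal is correct and follows essentially the same route as the paper's proof: a $\TC$ over pairs of monadic set variables whose step relation removes exactly one element from each set, with the extensions of $\varphi$ and $\psi$ pinned down by $\forall u(Z(u)\leftrightarrow\varphi(u))$ and reachability of $(\emptyset,\emptyset)$ tested. Your explicit disjunct for the case $|P|=|Q|=0$ is a sensible extra precaution (the paper's formula, which requires at least one $\TC$ step, glosses over this boundary case), and your re-checking of $\varphi$/$\psi$ on the removed elements is redundant but harmless.
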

\begin{proof}
Consider a structure $(A,I)$ and monadic second-order variables $X$, $Y$, $X'$ and $Y'$.
Let $\psi_{\mathrm{decrement}}$ denote an $\FO$-formula expressing that $I(X') = I(X)\setminus \{a\}$ and $I(Y') = I(Y) \setminus \{b\}$, for some $a\in I(X)$ and $b\in I(Y)$. Define
  \[
\psi_{\mathrm{ec}} \dfn  \exists X_{\emptyset}\Big(  \big( \forall x  \neg X_{\emptyset}(x) \big) \land [\TC_{X,Y,X',Y'} \psi_{\mathrm{decrement}}] (Z, Z',X_{\emptyset},X_{\emptyset}) \Big).
  \]
It is straightforward to check that $\psi_{\mathrm{ec}}$ holds in $(A,I)$ if and only if $\lvert I(Z) \rvert = \lvert I(Z') \rvert$. Therefore $\Ha xy(\varphi(x),\psi(y))$ is equivalent with the formula
\[
\exists Z \exists Z'  \big( \forall x (\varphi(x) \leftrightarrow Z(x)) \land  \forall y (\psi(y) \leftrightarrow Z'(y)) \land \psi_{\mathrm{ec}} \big),
\]
assuming that $Z$, $Z'$ are variable symbols that occur in neither $\varphi$ nor $\psi$.
\end{proof}


\begin{lemma}\label{lemma:trans}
Let $\tau=\{s, x_1,\dots,x_n\}$ and $\sigma=\{X_1,\dots,X_n\}$ be vocabularies, where $s$ is a binary second-order variable, $x_1,\dots,x_n$ are first-order variables, and $X_1,\dots,X_n$ are monadic second-order variables. For every $\FO(\FTC)$-formula $\varphi$ over $\tau$ there exists an $\MSOTC$-formula $\varphi^+$ over $\sigma$ such that
\[
(A,I)\models \varphi  \quad\Leftrightarrow\quad (B,J)\models \varphi^+,
\]
for every $(A,I)$ and $(B,J)$ such that
$(A,I)$ is a structure over vocabulary $\tau$, where $A=\{0,\dots,m\}$, for some $m\in\N$, and $I(s)$ is the canonical successor relation on $A$, and $(B,J)$ is a structure over vocabulary $\sigma$ such that $\lvert B\rvert = m$ and $\vert J(X_i) \rvert = I(x_i)$, for $1\leq i \leq n$.
\end{lemma}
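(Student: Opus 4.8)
The plan is to encode the ordered number domain $A=\{0,\dots,m\}$ of $(A,I)$ by the \emph{cardinalities} of subsets of the domain $B$ of $(B,J)$. Since $\lvert B\rvert=m$, the map $f$ sending a subset $R\subseteq B$ to $\lvert R\rvert$ is a surjection onto $A$; accordingly, a first-order variable ranging over $A$ is simulated by a monadic second-order variable ranging over subsets of $B$, under the invariant that the number it denotes equals the cardinality of the chosen subset. In particular $x_i$ is simulated by $X_i$ (the hypothesis $\lvert J(X_i)\rvert=I(x_i)$ is exactly this invariant) and each bound first-order variable is simulated by a fresh monadic second-order variable.

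First I would define the translation $\varphi\mapsto\varphi^+$ by induction on the formula, replacing every first-order variable by its representing set variable. An equality $y=z$ is translated to ``$\lvert Y\rvert=\lvert Z\rvert$'', which is available via the H\"artig quantifier (Proposition~\ref{prop:hartig}) as $\Ha uv(Y(u),Z(v))$; an atom $s(y,z)$ is translated to ``$\lvert Z\rvert=\lvert Y\rvert+1$'', expressible as
\[
\exists b\,\exists W\big(Z(b)\land\forall w\,(W(w)\leftrightarrow(Z(w)\land w\neq b))\land\Ha uv(W(u),Y(v))\big);
\]
Boolean connectives are handled homomorphically; $\exists y\,\chi$ and $\forall y\,\chi$ become $\exists Y\,\chi^+$ and $\forall Y\,\chi^+$ with $Y$ fresh, and no relativising side condition is needed because $f$ is onto $A$; and $[\TC_{\vec y,\vec y'}\chi](\vec u,\vec u')$ becomes $[\TC_{\vec Y,\vec Y'}\chi^+](\vec U,\vec U')$, with $\vec Y,\vec Y'$ fresh monadic tuples representing $\vec y,\vec y'$ and $\vec U,\vec U'$ the variables representing $\vec u,\vec u'$. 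Since $s$ and all first-order variables disappear (the representatives of bound variables being quantified away), $\varphi^+$ is a monadic formula, i.e.\ an $\MSOTC$-formula over $\sigma$, as required.

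Then I would prove by induction on $\chi$ that $(A,I)\models\chi$ iff $(B,J)\models\chi^+$, for all $(A,I),(B,J)$ as in the statement and all extensions of the two assignments that respect the cardinality invariant on the represented variables. The atomic, Boolean and first-order-quantifier cases are immediate from the correctness of the gadgets and from $f$ being onto $A$. The crux is the $\TC$ case: extending $f$ componentwise to tuples (of the length of $\vec y$), the induction hypothesis applied with $\vec y,\vec y'$ (resp.\ $\vec Y,\vec Y'$) left free says precisely that the binary relation $\B^+$ on subset-tuples defined by $\chi^+$ equals $(f\times f)^{-1}$ of the binary relation $\B$ on number-tuples defined by $\chi$, the remaining parameters of $\chi$ being fixed consistently by the ambient assignment on both sides. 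It then remains to observe that transitive closure commutes with preimages along a surjection: $(\vec R,\vec R')\in\TC(\B^+)$ iff $(f(\vec R),f(\vec R'))\in\TC(\B)$. The forward direction pushes a $\B^+$-path through $f$; the backward direction lifts a $\B$-path between $f(\vec R)$ and $f(\vec R')$ by choosing arbitrary $f$-preimages of the intermediate number-tuples while keeping the endpoints $\vec R,\vec R'$, which works exactly because $f$ is onto. Instantiating $\vec R,\vec R'$ with the values of $\vec U,\vec U'$ in $(B,J)$, whose cardinalities are the values of $\vec u,\vec u'$ in $(A,I)$, then closes the induction. I expect this commutation lemma for $\TC$ to be the one step that needs genuine care: although the argument is short, it is the only point where the hypothesis $\lvert B\rvert = m$ (paired with $\lvert A\rvert = m+1$) is actually used, so getting the counting bookkeeping right there is essential; everything else reduces to routine substitution bookkeeping.
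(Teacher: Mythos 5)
Your proposal is correct and follows essentially the same route as the paper: simulate each first-order variable over $\{0,\dots,m\}$ by a monadic variable whose cardinality encodes its value, express equality and successor via the H\"artig quantifier (your successor gadget removes an element from $Z$ where the paper adds one to $Y$ --- an immaterial variant), and translate quantifiers and $\TC$ homomorphically. The paper dismisses correctness as ``a simple inductive argument,'' so your explicit treatment of the $\TC$ case --- that transitive closure commutes with preimages under the surjection $R\mapsto\lvert R\rvert$ --- merely fills in detail the paper leaves implicit.
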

\begin{proof}
We define the translation ${}^+$ recursively as follows. In the translation, we introduce for each first-order variable $x_i$ a monadic second-order variable $X_i$ by using the corresponding capital letter with the same index. Consequently, in tuples of variables, identities between the variables are maintained. The idea of the translation is that natural numbers $i$ are simulated by sets of cardinality $i$. Identities between first-order variables are then simulated with the help of the H\"artig quantifier, which, by Proposition \ref{prop:hartig}, is definable in $\MSOTC$.
\begin{itemize}
\item For $\psi$ of the form $x_i=x_j$, define $\psi^+ \dfn \Ha xy \big(X_i(x), X_j(y)\big)$.
\item For $\psi$ of the form $s(x_i,x_j)$, define $\psi^+ \dfn \exists z \Big( \neg X_i(z) \land \Ha xy \big(X_i(x)\lor x=z, X_j(y)\big)\Big)$.
\item For $\psi$ of the form $\neg \varphi$ and $(\varphi\land \theta)$, define $\psi^+$ as $\neg\varphi^+$ and $(\varphi^+\land \theta^+)$, respectively.
\item For $\psi$ of the form $\exists x_i \varphi$, define $\psi^+ \dfn \exists X_i \varphi^+$, where $X_i$ is a monadic second-order variable.
\item For $\psi$ of the form $[\TC_{\vec{x},\vec{x'}} \varphi] (\vec{y},\vec{y'})$, define $\psi^+ \dfn [\TC_{\vec{X},\vec{X'}} \varphi^+] (\vec{Y},\vec{Y'})$, where $\vec{X}$, $\vec{X'}$, $\vec{Y}$, and $\vec{Y'}$ are tuples of monadic second-order variables that correspond to the tuples $\vec{x}$, $\vec{x'}$, $\vec{y}$, and $\vec{y'}$ of first-order variables.
\end{itemize}
The correctness of the translation follows by a simple inductive argument.
\end{proof}

With the help of the previous lemma, we are now ready to show how $\CMSOTC$-formulas can be simulated in $\MSOTC$.
  \begin{theorem}\label{thm:simulation}
    Every $\CMSOTC$-formula can be simulated by an $\MSOTC$-formula.
  \end{theorem}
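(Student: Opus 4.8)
The plan is to proceed by structural induction on the $\CMSOTC$-formula $\varphi$ over a vocabulary $\tau$ with counters, constructing an $\MSOTC$-formula $\varphi^\circ$ over $\tau^*$ that simulates it. The guiding principle, already visible in Lemma \ref{lemma:trans}, is that a counter $\mu$ holding value $i$ is represented by a set variable $\mu$ (viewed as a monadic second-order variable in $\tau^*$) of cardinality $i$. First I would handle the purely relational connectives and quantifiers: for $\neg$, $\lor$, $\exists x$ and $\exists X$ the translation commutes, and for $\exists \mu\,\varphi$ we set $(\exists\mu\,\varphi)^\circ \dfn \exists \mu\,\varphi^\circ$, where now $\mu$ is quantified as a monadic second-order variable; this is sound because a set of cardinality $i$ exists for every $i\in\{0,\dots,n\}$ and no larger cardinality is available, matching exactly the intended range of a counter. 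The $\TC$-operator is likewise translated componentwise: counters in the tuples $\vec X,\vec{X'},\vec Y,\vec{Y'}$ become monadic set variables, and since $\mB$ simulates $\mA$ precisely when the cardinalities match, the binary relation $\B$ unfolded by $\TC$ on the simulating side has exactly the simulating pairs of the relation on the original side — so the transitive closures correspond step for step.

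The two genuinely new cases are the counting term $\mu = \#\{x\mid\varphi\}$ and the numeric predicate $p(\mu_1,\dots,\mu_k)$. For the counting term, the idea is to say that the set $\mu$ has the same cardinality as $\{x\mid\varphi\}$; that is exactly the H\"artig quantifier, which by Proposition \ref{prop:hartig} is available in $\MSOTC$. Concretely I would define $(\mu=\#\{x\mid\varphi\})^\circ \dfn \Ha xy(\varphi^\circ(x), \mu(y))$, using the set variable $\mu$ in the second argument; soundness is immediate from the semantics of $\#$ and the definition of simulation, since $|J(\mu)| = I(\mu)$ must equal $|\{a\mid \mA[a/x]\models\varphi\}| = |\{a\mid \mB[a/x]\models\varphi^\circ\}|$ by the induction hypothesis. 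For the numeric predicate case, I would invoke Proposition \ref{prop:defining}: there is an $\FO(\FTC)$-formula $\varphi_p$ over $\{s,x_1,\dots,x_k\}$ capturing $Q_p$ on the ordered structure $B=\{0,1,\dots,|A|\}$ with successor $s$ and $x_i = I(\mu_i)$. Then Lemma \ref{lemma:trans} converts $\varphi_p$ into an $\MSOTC$-formula $\varphi_p^+$ over $\{X_1,\dots,X_k\}$ that holds in a structure of domain size $|A|$ with $|J(X_i)| = I(\mu_i)$ — which is precisely what "$\mB$ simulates $\mA$" supplies. So I would set $(p(\mu_1,\dots,\mu_k))^\circ \dfn \varphi_p^+$ with the variables $X_i$ renamed to the set variables $\mu_i$ of $\tau^*$.

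The main obstacle, and the point requiring the most care, is a bookkeeping mismatch in the domain sizes: Lemma \ref{lemma:trans} is stated for $A=\{0,\dots,m\}$ simulated by a structure with $|B|=m$, i.e. the ordered "counting" structure has one more element than the structure whose cardinalities encode the numbers. When plugging into the numeric-predicate case via Proposition \ref{prop:defining}, the ordered structure is $B=\{0,\dots,|A|\}$, so the simulating side must have domain of size $|A|$ — which is exactly $\mB$'s domain $A$. One must check this offset-by-one aligns consistently, and in particular that counter values, which range over $\{0,\dots,|A|\}$, can indeed all be realized as cardinalities of subsets of the $|A|$-element domain (the value $|A|$ itself is realized by the full domain, and $0$ by $\emptyset$), so no value is lost or spurious. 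Once this indexing is pinned down, the rest is a routine induction; I would also remark that free counters of $\varphi$ become free set variables of $\varphi^\circ$, so the statement is closed under the recursion and the base cases ($x=y$, $X(\vec x)$) are trivial since they involve no counters.
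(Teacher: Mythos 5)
Your proposal is correct and follows essentially the same route as the paper's proof: a compositional translation in which counters become monadic set variables of matching cardinality, the counting term is handled via the H\"artig quantifier (Proposition~\ref{prop:hartig}), and numeric predicates are handled by composing Proposition~\ref{prop:defining} with the translation of Lemma~\ref{lemma:trans}. Your explicit check of the off-by-one alignment between the ordered structure $B=\{0,\dots,|A|\}$ and the simulating domain of size $|A|$ is a point the paper leaves implicit, but it does not change the argument.
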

  \begin{proof}
  Let $\tau$ be a vocabulary with counters and $\tau^*$ the vocabulary without counters obtained from $\tau$ by viewing
  each counter as a set variable. We define recursively a translation ${}^*$ that maps $\CMSOTC$-formulas over vocabulary $\tau$ to $\MSOTC$-formulas over $\tau^*$.
 \begin{itemize}
\item For $\psi$ of the form $x_i=x_j$, define $\psi^* \dfn x_i=x_j$.
\item For $\psi$ of the form $X(x_1,\dots,x_n)$, define $\psi^* \dfn X(x_1,\dots,x_n)$.
\item For an $\NL$ numeric predicate  $Q_p$ and $\psi$ be of the form $p(\mu_1,\dots,\mu_k)$, define $\psi^*$ as $\varphi_p^+(\mu_1/X_1,\dots,\mu_k/X_k)$, where ${}^+$ is the translation defined in Lemma \ref{lemma:trans} and $\varphi_p$ the defining formula of $Q_p$ obtained from Proposition \ref{prop:defining}.
\item  For $\psi$ of the form $\mu = \#\{x \mid \varphi\}$, define $\psi^*$ as the $\MSOTC$-formula $\Ha xy(\varphi^*,\mu(y))$.
\item For $\psi$ of the form $\neg \varphi$ and $(\varphi\land \theta)$, define $\psi^*$ as $\neg\varphi^*$ and $(\varphi^*\land \theta^*)$, respectively.
\item For $\psi$ of the form $\exists x_i \varphi$, $\exists \mu_i \varphi$, and $\exists X_i \varphi$, define $\psi^*$ as $\exists x_i \varphi^*$, $\exists \mu_i \varphi^*$, and $\exists X_i \varphi^*$. Remember that, on the right,  $\mu_i$ is treated a as a monadic second-order variable.
\item For $\psi$ of the form $[\TC_{\vec{X},\vec{X'}} \varphi] (\vec{Y},\vec{Y'})$, define $\psi^* \dfn [\TC_{\vec{X},\vec{X'}} \varphi^*] (\vec{Y},\vec{Y'})$.
\end{itemize}
We claim that, for every $\CMSOTC$-formula $\psi$ over $\tau$, $\psi^*$ is an $\MSOTC$-formula over $\tau^*$ that simulates $\psi$. 
Correctness of the simulation follows by induction using Lemma \ref{lemma:trans} and Proposition \ref{prop:defining}.

We show the case for the numeric predicates. Let $\mA=(A,I)$ be a $\tau$-structure and $\mA^*$ a $\tau^*$-structure that simulates $\mA$. Let $Q_p$ be a $k$-ary $\NL$ numeric predicate, $\mu_1,\dots,\mu_k$ counters from $\tau$, and $\varphi_p$ the defining $\FO(\FTC)$-formula of $Q_p$ given by Proposition \ref{prop:defining}. Then, by Proposition \ref{prop:defining},
\[
\mA\models p(\mu_1,\dots,\mu_k) \text{ iff } (B,J)\models \varphi_p,  
\]
where $B=\{0,1,\dots, \lvert A \rvert\}$, $J(s)$ is the successor relation of $B$, and $J(x_i)=I(\mu_i)$, for $1\leq i \leq k$. Let ${^+}$ denote the translation from $\FO(\FTC)$ to $\MSOTC$ defined in Lemma \ref{lemma:trans}. Then, by Lemma \ref{lemma:trans}, it follows that $(B,J)\models \varphi_p \text{ iff } \mA\models \varphi_p^+$.
\end{proof}
In the next example, we introduce notation for some $\MSOTC$-definable numeric predicates that are used in the following sections. 
\begin{example}\label{ex:predicates}
Let $k$ be a natural number, $X,Y,Z,X_1,\dots,X_n$ monadic second-order variables, and $\mA=(A,I)$ an appropriate structure. The following numeric predicates are clearly $\NL$-definable and thus, by Theorem \ref{thm:simulation}, definable in $\MSOTC$:
\begin{itemize}
\item $\mA\models\size(X,k)$ iff $\lvert I(X) \rvert = k$,
\item $\mA\models \times(X,Y,Z)$ iff $\lvert I(X) \rvert \times \lvert I(Y) \rvert = \lvert I(Z) \rvert$,
\item $\mA\models +(X_1,\dots, X_n,Y)$ iff $\lvert I(X_1) \rvert +\dots+ \lvert I(X_n) \rvert = \lvert I(Y) \rvert$.
\end{itemize}
\end{example}

\section{Order-invariant MSO}\label{sec:order}
Order-invariance plays an important role in finite model theory. In descriptive complexity theory many characterisation rely on the existence of a linear order. However the particular order in a given stricture is often not important. Related to applications in computer science, it is often possible to access an ordering of the structure that is not controllable and thus a use of the ordering should be such that change in the ordering should not make a difference. Consequently, in both cases order can be used, but in a way that the described properties are \emph{order-invariant}.

Let $\tau_\leq \dfn \tau\cup\{\leq\}$ be a finite vocabulary, where $\leq$ is a binary relation symbol. A formula $\varphi\in\MSO$ over $\tau_\leq$ is \emph{order-invariant}, if for every $\tau$-structure $\mA$ and expansions $\mA'$ and $\mA^*$ of $\mA$ to the vocabulary $\tau_\leq$, in which $\leq^{\mA'}$ and $\leq^{\mA^*}$ are total linear orders of $A$, we have that 
\(
\mA' \models \varphi \text{ if and only if } \mA^*\models \varphi.
\)
A class $\mathcal{C}$ of $\tau$-structures is \emph{definable in order-invariant $\MSO$} if and only if the class
\(
\{(\mA,\leq )  \mid \mA\in \mathcal{C} \text{ and $\leq$ is a complete linear order of $A$}\}
\)
is definable by some order-invariant $\MSO$-formula.

We call a vocabulary $\tau$ \emph{a unary vocabulary} if it consists of only monadic second-order variables.
In this section we establish that on unary vocabularies $\MSO(\TC)$ is strictly more expressive than order-invariant $\MSO$. The separation holds already for the empty vocabulary. 

\subsection{Separation on empty vocabulary}
First note that over vocabulary $\{\leq\}$ there exists only one structure, up to isomorphism, of size $k$, for each $k\in N$, in which $\leq$ is interpreted as a complete linear order. Consequently, every $\MSO$-formula of vocabulary $\{\leq\}$ is order-invariant. Also note that, in fact, $\{\leq\}$-structures interpreted as word models correspond to finite strings over some fixed unary alphabet. Thus, via B\"uchi's theorem, we obtain that, over the empty vocabulary, order-invariant $\MSO$ captures essentially regular languages over unary alphabets.

\begin{definition}
For a finite word $w$ of some finite alphabet $\Sigma=\{a_1\,\dots,a_k\}$, a \emph{Parikh vector} $p(w)$ of $w$ is the $k$-tuple $(\vert w\rvert_{a_1},\dots.\vert w\rvert_{a_k})$ where $\vert w\rvert_{a_i}$ denotes the number of $a_i$s in $w$. A \emph{Parikh image} $P(L)$ of a language $L$ is the set $\{p(w) \mid w\in L\}$ of Parikh vectors of the words in the language.
\end{definition}

A subset $S$ of $\N^k$ is a \emph{linear set} if  
\(
S= \{\vec{v}_0  + \sum_{i=1}^m  a_i \vec{v}_i \mid a_1,\dots,a_m\in\N\}
\)
for some \emph{offset} $\vec{v}_0\in \N^k$ and \emph{generators} $\vec{v}_1,\dots,\vec{v}_m\in \N^k$.
$S$ is \emph{semilinear} if it is a finite union of linear sets.

\begin{theorem}[Parikh's theorem, \cite{Parikh:1966}]
For every regular language $L$ its Parikh image $P(L)$ is a finite union of linear sets.
\end{theorem}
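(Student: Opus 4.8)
The plan is to work directly with a nondeterministic finite automaton $M=(Q,\Sigma,\delta,q_0,F)$ accepting $L$ and to read off $P(L)$ from the structure of accepting runs. An accepting run is a walk in the transition graph of $M$ from $q_0$ to a state of $F$, and its Parikh vector $p(\cdot)$ is that of the word spelled along the walk; thus $P(L)$ is exactly the set of Parikh vectors of such walks. First I would prove a normal form for walks: every walk $W$ from $q_0$ to a final state is obtained from a \emph{simple} path $\pi$ (visiting no state twice, hence of length $<\lvert Q\rvert$) by repeatedly inserting simple cycles of $M$, in such a way that the subgraph of $M$ consisting of the edges of $\pi$ together with the edges of all inserted cycles is connected. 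This goes by induction on $\lvert W\rvert$: while $W$ revisits a state, excise a minimal cycle at that state; the excised cycle shares a vertex with the remainder, so connectivity is maintained.

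For each simple path $\pi$ from $q_0$ to a state of $F$ and each set $S$ of simple cycles of $M$ such that the subgraph formed by $\pi$ and $S$ is connected, define the linear set
\[
L(\pi,S)\dfn\Big\{\,p(\pi)+\sum_{c\in S}a_c\,p(c)\ \Big|\ (a_c)_{c\in S}\in\N^{S}\,\Big\},
\]
with offset $p(\pi)$ and generators $p(c)$ for $c\in S$. There are only finitely many such pairs $(\pi,S)$, since $M$ has finitely many simple paths and finitely many simple cycles, so $\bigcup_{(\pi,S)}L(\pi,S)$ is a finite union of linear sets. I claim it equals $P(L)$, which is the theorem. The inclusion $P(L)\subseteq\bigcup_{(\pi,S)}L(\pi,S)$ is immediate from the normal form (take $S$ to be the set of distinct cycle types inserted). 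For the converse one needs a \emph{splicing lemma}: if $\pi\cup S$ is connected then for every tuple $(a_c)_{c\in S}$ there is a single walk from $q_0$ to a final state with Parikh vector $p(\pi)+\sum_c a_c p(c)$. This I would prove by attaching the required cycle copies to $\pi$ one at a time; connectivity guarantees an ordering $c_1,\dots,c_m$ of the cycles still to be attached in which each $c_i$ shares a state with the walk built so far, and at that shared state a copy of $c_i$ can be spliced in without changing the endpoints.

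The splicing lemma, together with the small amount of bookkeeping needed to pass between the multiset of inserted cycles and the set of cycle types with multiplicities, is the step I expect to demand the most care; the rest is routine graph surgery and finiteness counting. An alternative, more abstract route that avoids the case analysis is to invoke the algebraic fact that the semilinear subsets of $\N^k$ are precisely the \emph{rational} subsets of the commutative monoid $(\N^k,+)$, together with the facts that rational subsets are preserved under monoid homomorphisms and that a regular language is a rational subset of $\Sigma^*$: applying the Parikh homomorphism $\Sigma^*\to\N^k$ then yields a rational, hence semilinear, image. (Both approaches extend to context-free $L$, the first by replacing walks in $M$ with derivation trees of a grammar in Chomsky normal form and simple cycles with bounded pumping steps; only the regular case is needed here.)
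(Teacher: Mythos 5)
The paper offers no proof of this statement---it is cited from Parikh's 1966 paper---so your proposal stands or falls on its own. Your main combinatorial argument has a genuine gap: the splicing lemma is false as you state it, because you allow $a_c=0$. If a cycle $c\in S$ is the only bridge connecting some other cycle $c'$ to $\pi$, then the tuple with $a_c=0$ and $a_{c'}=1$ admits no realizing walk, and the corresponding vector need not lie in $P(L)$ at all. Concretely, take states $q_0$ (initial and final) and $q_1$ with transitions $q_0\xrightarrow{a}q_1$, $q_1\xrightarrow{a}q_0$, $q_1\xrightarrow{b}q_1$, so $L=(ab^*a)^*$. The trivial path $\pi$ at $q_0$ together with the simple cycles $c_1=q_0q_1q_0$ (Parikh vector $(2,0)$) and $c_2=q_1q_1$ (Parikh vector $(0,1)$) forms a connected subgraph, yet $(0,1)\in L(\pi,\{c_1,c_2\})$ while $(0,1)\notin P(L)$. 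Hence $\bigcup_{(\pi,S)}L(\pi,S)$ strictly contains $P(L)$ and the claimed equality fails; only the inclusion $P(L)\subseteq\bigcup L(\pi,S)$ survives, which does not prove the theorem. The standard repair is to charge one mandatory copy of each cycle to the offset: replace $L(\pi,S)$ by the linear set with offset $p(\pi)+\sum_{c\in S}p(c)$ and the same generators $\{p(c)\}_{c\in S}$. Both inclusions then go through: the excision normal form gives each inserted cycle type multiplicity at least one (so you land in the new set for $S$ equal to the support of the inserted cycles), and the splicing argument becomes correct because every cycle of $S$ is guaranteed at least one copy, so connectivity of $\pi\cup S$ genuinely yields an insertion order, after which surplus copies can be attached to the copies already present.

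Your alternative route---regular languages are the rational subsets of $\Sigma^*$, rational subsets are preserved under monoid homomorphisms, and the rational subsets of the commutative monoid $(\N^k,+)$ are exactly the semilinear sets (the star of a finite set of vectors is a linear set with offset $0$, and semilinear sets are closed under union and sum)---is correct and sidesteps this pitfall entirely. If you write the argument up, either use that route or the corrected offsets above.
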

We use the following improved version of Parikh's theorem:
\begin{theorem}[\cite{KoTo10}]\label{Parikhthm}
For every regular language $L$ over alphabet of size $k$ its Parikh image $P(L)$ is a finite union of linear sets with at most $k$ generators.
\end{theorem}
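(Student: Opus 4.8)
The plan is to peel off the automata-theoretic content, which is already supplied by Parikh's theorem, and reduce to a purely combinatorial fact about semilinear sets in $\N^k$, proved by a cone-triangulation argument. Concretely, by Parikh's theorem $P(L)$ is a finite union of linear sets, so it suffices to show that an arbitrary linear set $S=\vec{v}_0+\{\sum_{i=1}^m a_i\vec{v}_i\mid a_i\in\N\}\subseteq\N^k$ is itself a finite union of linear sets each using at most $k$ generators (this in fact holds for every semilinear subset of $\N^k$, the bound $k$ coming only from the ambient dimension $|\Sigma|$). Translating by $-\vec v_0$ at the end, I may assume $\vec v_0=\vec 0$ and work with the finitely generated submonoid $H=\langle\vec{v}_1,\dots,\vec{v}_m\rangle$ of $\N^k$. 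First I would pass to the rational polyhedral cone $C=\mathrm{cone}(\vec{v}_1,\dots,\vec{v}_m)\subseteq\mathbb{R}^k$, which is pointed since $C\subseteq\mathbb{R}^k_{\geq 0}$, and triangulate it (for instance via a pulling triangulation with respect to the listed generators) into finitely many simplicial subcones $\sigma_1,\dots,\sigma_N$, each of the form $\mathrm{cone}(\vec u_1,\dots,\vec u_d)$ with $\{\vec u_1,\dots,\vec u_d\}$ a linearly independent subset of $\{\vec{v}_1,\dots,\vec{v}_m\}$; in particular $d\leq k$. Since $H\subseteq C=\bigcup_i\sigma_i$, it is enough to decompose each $H\cap\sigma_i$.

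Fix such a simplicial cone $\sigma=\mathrm{cone}(\vec u_1,\dots,\vec u_d)$ and recall the classical description $\sigma\cap\mathbb{Z}^k=\bigsqcup_{\vec p}\bigl(\vec p+\textstyle\sum_j\N\vec u_j\bigr)$, where $\vec p$ ranges over the (finitely many) lattice points of the half-open fundamental parallelepiped $\Pi=\{\sum_j t_j\vec u_j\mid 0\leq t_j<1\}$. The crucial observation is that for each fixed residue $\vec p$ the set $D_{\vec p}\dfn\{(c_j)\in\N^d\mid \vec p+\sum_j c_j\vec u_j\in H\}$ is \emph{upward closed}: if $\vec p+\sum_j c_j\vec u_j\in H$, then adding any $\vec u_j$ stays in $H$, because each $\vec u_j$ is one of the generators of $H$. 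By Dickson's lemma $D_{\vec p}$ has finitely many minimal elements, hence is a finite union of translated orthants of $\N^d$, so $H\cap(\vec p+\sum_j\N\vec u_j)$ is a finite union of linear sets all of which use only the $d\leq k$ generators $\vec u_1,\dots,\vec u_d$. Taking the union over the finitely many residues $\vec p$ and the finitely many cones $\sigma_i$, and translating back by $\vec v_0$, yields the desired decomposition of $S$; substituting it into Parikh's theorem completes the argument.

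The step I expect to carry the real content is the second one: pinning down the upward-closed behaviour and thereby bridging the gap between the lattice points of the cone and the elements genuinely attained inside the monoid. Already $\langle 2,3\rangle=\{0\}\cup(2+\N)\subseteq\N$ shows that splitting one linear set into several is unavoidable, so the bookkeeping has to allow it; and the triangulation must be set up so that each simplicial piece is spanned only by original generators, which is exactly what forces its dimension, and hence its number of generators, down to at most $k$. It is also worth recording that the bound $k$ is optimal: $P(\Sigma^*)=\N^k$ is a single linear set, and no finite union of linear sets each with fewer than $k$ generators can cover $\N^k$, since such a union lies inside a finite union of affine subspaces of dimension $<k$, which cannot contain all of $\N^k$.
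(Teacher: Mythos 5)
Your argument is correct, but note that the paper itself gives no proof of this statement: it is imported wholesale from the cited reference, so any comparison is with Kopczy\'nski and To's original argument rather than with anything in the text. Their proof works directly on the automaton (an induction on the structure of the NFA) and its whole point is quantitative: for fixed $k$ it bounds the \emph{number} and the description size of the linear sets polynomially in the automaton, which is what makes the result useful for complexity applications. You instead factor the statement as ``Parikh's theorem, plus the purely geometric fact that every linear (hence every semilinear) subset of $\N^k$ is a finite union of linear sets with at most $k$ linearly independent generators,'' and prove the latter by Carath\'eodory-style decomposition of the cone into simplicial subcones spanned by original generators, the fundamental-parallelepiped decomposition of the lattice points of each simplicial cone, and Dickson's lemma applied to the upward-closed exponent sets $D_{\vec p}$. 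This is essentially the classical Eilenberg--Sch\"utzenberger/Ito normal form for semilinear sets; it is more general (it applies to arbitrary semilinear sets, not just Parikh images of regular languages) but non-quantitative, since Dickson's lemma gives no bound on the number of minimal elements. For the use made of the theorem in this paper --- only the qualitative fact that the Parikh image is a finite union of linear sets with at most $2^k$ generators is needed in Section 6 --- your proof fully suffices. All the individual steps check out: the upward-closedness of $D_{\vec p}$ does hinge, as you flag, on the simplicial rays being \emph{original} generators of the monoid $H$, and your optimality remark and the example $\langle 2,3\rangle$ are both accurate.
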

Especially, the Parikh image of a regular language $L$ over the unary alphabet can be described as a finite union of sets of the form $\{a + n b \mid n\in\N\}$, where $a,b\in\N$. 

\begin{proposition}\label{prop:prime}
The class $\mathcal{C} = \{\mA \mid \lvert A \rvert \text{ is a prime number}\}$ of $\emptyset$-structures is not definable in order-invariant $\MSO$.
\end{proposition}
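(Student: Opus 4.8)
The plan is to reduce the claim to a statement about regular languages over a unary alphabet and then invoke the sharpened Parikh theorem (Theorem~\ref{Parikhthm}). As already observed, every $\MSO$-formula over the empty vocabulary with a linear order $\leq$ is automatically order-invariant, and over word models these formulas correspond, via B\"uchi's theorem, exactly to the regular languages over a one-letter alphabet. Hence the class $\mathcal{C}$ is definable in order-invariant $\MSO$ if and only if the set $\{ k \in \N \mid k \text{ is prime}\}$ (identified with the set of lengths of words in the corresponding unary language $L$) equals the Parikh image $P(L)$ of some regular language $L$ over a unary alphabet.

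Next I would use Theorem~\ref{Parikhthm} in its stated specialisation to the unary alphabet: the Parikh image of any regular unary language is a finite union of sets of the form $\{a + n b \mid n \in \N\}$ with $a, b \in \N$. So it suffices to show that the set of primes is \emph{not} such a finite union. This is an elementary number-theoretic argument. Suppose, for contradiction, that the primes equal $\bigcup_{j=1}^{r} \{a_j + n b_j \mid n \in \N\}$. If some $b_j = 0$ the corresponding set is the singleton $\{a_j\}$, contributing only finitely many primes, so after discarding these we may assume every $b_j \geq 1$ and that the remaining sets still cover all but finitely many primes. Since there are infinitely many primes, at least one arithmetic progression $\{a_j + n b_j \mid n \in \N\}$ with $b_j \geq 1$ is contained in the set of primes. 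But such a progression contains composite numbers: for instance, picking $n$ so that $a_j + n b_j > 1$ and then replacing $n$ by $n + (a_j + n b_j)$ yields the value $(a_j + n b_j)(1 + b_j)$, which is composite (both factors exceed $1$). This contradiction completes the argument, and together with the $\MSOTC$-definability of primality (via the predicates of Example~\ref{ex:predicates}, a prime being a number $>1$ with no nontrivial factorisation, expressible using $\times$ and quantification over monadic variables) it yields the desired separation on the empty vocabulary.

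I expect the only genuinely delicate point to be the bookkeeping in reducing to word models and to the Parikh image: one must be careful that over $\{\leq\}$ there is, up to isomorphism, exactly one structure of each size, so that "defining $\mathcal{C}$ in order-invariant $\MSO$" really is the same as "defining a unary regular language whose word-lengths are exactly the primes." Once that correspondence is set up cleanly, the rest is the routine application of Theorem~\ref{Parikhthm} and the short contradiction above.
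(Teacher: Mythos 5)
Your proposal is correct and follows essentially the same route as the paper: reduce to unary word models via B\"uchi's theorem and then observe that the set of primes is not a Parikh image of a regular language because it is not semilinear. The only difference is that you spell out the elementary argument that $\primes$ is not a finite union of arithmetic progressions (which the paper merely asserts), and you invoke the sharpened Theorem~\ref{Parikhthm} where the ordinary Parikh theorem already suffices, since any infinite linear subset of $\N$ contains an infinite arithmetic progression.
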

\begin{proof}
First realise that the set of prime numbers $\primes$ is not semilinear. Consequently, it follows from Parikh's theorem that any language that is regular cannot have $\primes$ as its Parikh image. Thus, by B\"uchi's theorem, the class of word models $\mB$ over vocabulary $\{\leq, P_a\}$ such that $\lvert B\rvert \in\primes$ is not definable in $\MSO$. Hence $\mathcal{C}$ is not definable in order-invariant $\MSO$. 
\end{proof}
However the following example establishes the class $\mathcal{C}$ is definable in $\MSO(\TC)$.

\begin{example}\label{ex:prime}
The class $\mathcal{C} = \{\mA \mid \lvert A \rvert \text{ is a
prime number}\}$ of $\emptyset$-structures is defined by the
following formula in
$\MSO(\TC)$. We use $\MSOTC$-definable numeric predicates introduced in Example \ref{ex:predicates}.
\[
\exists X \forall Y \forall Z\big(  \forall x (X(x)) \land (\size(Y,1) \lor \size(Z,1) \lor \neg \times(Y,Z,X)) \big).
\]
\end{example}

\begin{corollary}\label{cor:separation}
For any vocabulary $\tau$, there exists a class $\mathcal{C}$ of $\tau$-structures such that $\mathcal{C}$ is definable in $\MSOTC$ but it is not definable in order-invariant $\MSO$.
\end{corollary}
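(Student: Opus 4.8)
The plan is to reduce the general statement to the case of the empty vocabulary, which has already been handled in Proposition~\ref{prop:prime} and Example~\ref{ex:prime}. For the empty vocabulary $\tau = \emptyset$, the class $\mathcal{C} = \{\mA \mid \lvert A\rvert \text{ is prime}\}$ is definable in $\MSOTC$ by Example~\ref{ex:prime} but not in order-invariant $\MSO$ by Proposition~\ref{prop:prime}, so the statement holds there directly.

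For an arbitrary vocabulary $\tau$, the idea is to pad the empty-vocabulary witness class by ignoring the interpretation of $\tau$. Concretely, I would define
\[
\mathcal{C}_\tau \dfn \{\mA \mid \mA \text{ is a $\tau$-structure and } \lvert A\rvert \text{ is a prime number}\}.
\]
On the positive side, the $\MSOTC$-formula from Example~\ref{ex:prime} uses no symbols from $\tau$ at all (it only quantifies over fresh monadic second-order variables and uses the $\MSOTC$-definable numeric predicates $\size$ and $\times$), so the very same formula defines $\mathcal{C}_\tau$ over any vocabulary $\tau$; adding unused relation symbols to the vocabulary does not affect satisfaction.

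For the negative side, suppose toward a contradiction that $\mathcal{C}_\tau$ were definable in order-invariant $\MSO$, say by an order-invariant $\MSO(\tau_\leq)$-formula $\varphi$. The key step is to specialise $\varphi$ to structures in which all relation symbols of $\tau$ receive a fixed trivial interpretation (e.g.\ empty relations, and a fixed designated element for any first-order variables in $\tau$). Syntactically, substituting these canonical interpretations into $\varphi$ yields an $\MSO$-formula $\varphi'$ over the vocabulary $\{\leq\}$ (together with whatever is needed to name the designated element, which over a linear order is first-order definable), and $\varphi'$ is order-invariant because $\varphi$ is. Then $\varphi'$ would define the class $\{\mA \mid \mA \text{ is an $\emptyset$-structure and } \lvert A\rvert \text{ prime}\}$ in order-invariant $\MSO$, contradicting Proposition~\ref{prop:prime}.

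The main obstacle is the bookkeeping in this specialisation step: one must be careful that forcing the $\tau$-symbols to trivial values can genuinely be done inside an $\MSO$-formula and preserves order-invariance, and that the resulting reduct really does land in the scope of Proposition~\ref{prop:prime} (which is stated only for the empty vocabulary). A clean way to sidestep subtleties is to observe that $\{0,\dots,n\}$-shaped issues do not arise here because all the padding relations are uniformly definable; alternatively one can argue model-theoretically: if $\varphi$ defined $\mathcal{C}_\tau$, then for each $n$ pick the $\tau$-structure $\mA_n$ with domain of size $n$ and empty relations; order-invariance of $\varphi$ on the $\mA_n$'s gives, via B\"uchi's theorem applied to the unary-alphabet word models of size $n$, a regular set of sizes equal to $\primes$, which is impossible by Parikh's theorem exactly as in Proposition~\ref{prop:prime}. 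Either route completes the proof, and I would present the model-theoretic one as it reuses Proposition~\ref{prop:prime}'s argument verbatim.
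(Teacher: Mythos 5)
Your proposal is correct and takes essentially the same route as the paper, which leaves the corollary unproved as an immediate consequence of Proposition~\ref{prop:prime} and Example~\ref{ex:prime}: lift the prime-cardinality class to $\tau$-structures, note that the $\MSOTC$ formula mentions no $\tau$-symbols, and reduce non-definability to the empty-vocabulary case by fixing trivial interpretations of the $\tau$-symbols. Your explicit treatment of the specialisation step (and of first-order variables possibly occurring in $\tau$) is sound and merely spells out what the paper treats as obvious.
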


\subsection{Inclusion on unary vocabularies}

We will show that every class of structures over a unary vocabulary $\tau$ that is definable in order-invariant $\MSO$ is also definable in $\MSO(\TC)$.

\begin{definition}
Let $\tau= \{X_1,\dots,X_k\}$ be a finite unary vocabulary and let $Y_1,\dots,Y_{2^k}$ denote the Boolean combinations of the variables in $\tau$ in some fixed order. For every structure $\mA=(A,I)$ over $\tau$, we extend the scope of $I$ to include also $Y_1,\dots,Y_{2^k}$ in the obvious manner. The \emph{Parikh vector $p(\mA)$ of $\mA$} is the $2^k$-tuple $\big(\lvert I(Y_1) \rvert ,\dots, \lvert I(Y_{2^k}) \rvert \big)$.  A \emph{Parikh image $P(\mathcal{C})$ of a class of $\tau$-structures $\mathcal{C}$} is the set $\{p(\mA) \mid \mA \in \mathcal{C}\}$.
\end{definition}

\begin{theorem}
Over finite unary vocabularies $\MSOTC$ is strictly more expressive than order-invariant $\MSO$.
\end{theorem}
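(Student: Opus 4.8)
The strict inclusion splits into two parts: (i) some $\MSOTC$-definable class over a unary vocabulary is \emph{not} definable in order-invariant $\MSO$, and (ii) every order-invariant $\MSO$-definable class over a unary vocabulary \emph{is} definable in $\MSOTC$. Part (i) is already settled by Corollary~\ref{cor:separation} (concretely, primality of the domain size over the empty vocabulary, via Example~\ref{ex:prime} and Proposition~\ref{prop:prime}). So the plan is to prove (ii) and then combine.

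Fix a unary vocabulary $\tau=\{X_1,\dots,X_k\}$, and let $\delta_1,\dots,\delta_{2^k}$ be the quantifier-free $\FO$-formulas defining the minterms $Y_1,\dots,Y_{2^k}$, i.e.\ the $2^k$ Boolean combinations that partition the domain. Let $\mathcal{C}$ be an order-invariant $\MSO$-definable class of $\tau$-structures, witnessed by an order-invariant $\MSO$-formula $\varphi$ over $\tau_\leq$. First I would read a $\tau_\leq$-structure whose $\leq$ is a linear order as a word over the alphabet $\Gamma=\{Y_1,\dots,Y_{2^k}\}$ (position $i$ carries the label $Y_j$ with $i\in I(Y_j)$); then the class $L_\varphi$ of words satisfying $\varphi$ is regular by B\"uchi's theorem, and order-invariance of $\varphi$ says exactly that membership in $L_\varphi$ depends only on the Parikh vector of the word. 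Since every word over $\Gamma$ with Parikh vector $p(\mA)$ is isomorphic to some word model of $\mA$ with a linear order, it follows that $\mA\in\mathcal{C}$ iff $p(\mA)\in P(L_\varphi)$, so $\mathcal{C}$ is determined by the set $S\dfn P(\mathcal{C})=P(L_\varphi)\subseteq\N^{2^k}$. By the improved Parikh theorem (Theorem~\ref{Parikhthm}), $S$ is a finite union of linear sets, each with at most $2^k$ generators.

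Next I would package membership in $S$ as a numeric predicate. Define $Q_{p_{\mathcal{C}}}\subseteq\N^{2^k+1}$ by $(n,n_1,\dots,n_{2^k})\in Q_{p_{\mathcal{C}}}$ iff $(n_1,\dots,n_{2^k})\in S$. This predicate is decidable in $\NL$ in the sense of Definition~\ref{def:NLpredicate}: given the numbers in unary, a machine nondeterministically picks one of the finitely many linear sets $\vec v_0+\sum_{i=1}^{m}a_i\vec v_i$ comprising $S$ and guesses the coefficients $a_1,\dots,a_m$ in binary; since each generator $\vec v_i$ is nonzero, any representation of a point with all coordinates $\le n$ forces $a_i\le n$, so each coefficient needs only $O(\log n)$ bits; the machine then computes each coordinate $v_0[t]+\sum_i a_i v_i[t]$ — a sum of constantly many $O(\log n)$-bit numbers, hence in logarithmic space — and compares it with $n_t$. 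Using the counting extension, $\mathcal{C}$ is then defined by the $\CMSOTC$-formula
\[
\exists\mu_1\cdots\exists\mu_{2^k}\Bigl(\ \bigwedge_{j=1}^{2^k}\mu_j=\#\{x\mid\delta_j(x)\}\ \land\ p_{\mathcal{C}}(\mu_1,\dots,\mu_{2^k})\ \Bigr),
\]
since $\mu_j$ evaluates to $\lvert Y_j^{\mA}\rvert$ and hence the last conjunct asserts precisely $p(\mA)\in S$ (the hidden argument $\lvert A\rvert$ is consistent because $\sum_j\lvert Y_j^{\mA}\rvert=\lvert A\rvert$ automatically). By Theorem~\ref{thm:simulation} this $\CMSOTC$-formula is equivalent to an $\MSOTC$-formula defining $\mathcal{C}$. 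Together with Corollary~\ref{cor:separation}, this yields the strict inclusion.

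The main obstacle is the $\NL$-decidability of fixed semilinear-set membership on unary inputs; everything else reduces to black-boxing B\"uchi's theorem, Parikh's theorem in the form of Theorem~\ref{Parikhthm}, and the collapse $\CMSOTC=\MSOTC$ of Theorem~\ref{thm:simulation}. The key quantitative point making the $\NL$ argument work is that in a linear-set representation of a point whose coordinates are bounded by the domain size, the coefficients are themselves bounded by the domain size, so they fit in logarithmic space.
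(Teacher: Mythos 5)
Your proposal is correct, and its skeleton — split into strictness (via Corollary~\ref{cor:separation}) plus inclusion, then Büchi's theorem, order-invariance reducing everything to Parikh images, and the improved Parikh theorem making $P(\mathcal{C})$ semilinear — is exactly the paper's. Where you diverge is the last step, encoding semilinear membership in $\MSOTC$. The paper handles each linear set $A=\{\vec v_0+\sum_j a_j\vec v_j\}$ by an explicit formula: it existentially quantifies monadic sets of the fixed sizes $\vec v_i[j]$, a set $S_i$ whose cardinality plays the role of the coefficient $a_i$, and auxiliary sets for the products, then assembles the coordinates with the $\size$, $\times$ and $+$ predicates of Example~\ref{ex:predicates}; only those three elementary predicates need to be checked to be in $\NL$. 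You instead wrap the entire semilinear set into a single $\NL$ numeric predicate applied to the counters $\mu_j=\#\{x\mid\delta_j(x)\}$ and invoke Theorem~\ref{thm:simulation} once. This shifts the burden onto the $\NL$-decidability of semilinear membership on unary inputs, and your quantitative argument for it is the right one: after discarding zero generators, any representation of a point with coordinates bounded by the input forces each coefficient to be bounded by the input, so the guessed coefficients fit in logarithmic space and the arithmetic is on constantly many $O(\log n)$-bit numbers. Both routes ultimately pass through the $\CMSOTC$-to-$\MSOTC$ collapse; yours is shorter and hides the combinatorics inside the complexity-theoretic claim, while the paper's is more self-contained in that it only uses numeric predicates whose $\NL$-decidability is immediate. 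One cosmetic caveat: make the removal of zero generators explicit (or note that their coefficients can be taken to be $0$), since your bound $a_i\le n$ relies on every remaining generator having a nonzero coordinate.
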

\begin{proof}
Strictness follows directly from Corollary \ref{cor:separation} and thus it suffices to establish inclusion.
Let $\tau=\{X_1,\dots,X_k\}$ be a finite unary vocabulary and $\varphi$ an order-invariant $\MSO$-formula of vocabulary $\tau_\leq$. Let $\mathcal{C}$ be the class of $\tau$ structures that $\varphi$ defines.
We will show that $\mathcal{C}$ is definable in $\MSOTC$.
Set $n\dfn 2^k$ and let $Y_1,\dots, Y_{n}$ denote the Boolean combinations of the variables in $\tau$ in some fixed order; we regard these combinations also as fresh monadic second-order variables and set $\sigma \dfn \{Y_1,\dots,Y_n\}$. For each $X_i$, let $\chi_i$ denote the disjunction of those variables $Y_j$ in which $X_i$ occurs positively. Let $\mathcal{C}_\leq$ denote the class of $\tau_\leq$-structures that $\varphi$ defines. We may view $\mathcal{C}_\leq$ also as a language $\la$ over the alphabet $\sigma$ and as the class $\la_w$ of $\sigma_\leq$-structures corresponding to the word models of the language $\la$.
Let $\varphi^*$ denote the order-invariant $\MSO$-formula over $\sigma_\leq$ obtained from $\varphi$ by substituting each variable $X_i$ by the formula $\chi_i$. Since $\varphi^*$ clearly defines $\la_w$, by B\"uchi's Theorem, $\la$ is regular. Consequently, by the improved version of Parikh's Theorem (Theorem \ref{Parikhthm}), the Parikh image $\pa(\la)$ of $\la$ is a finite union of linear sets with at most $n$ generators.

Observe that if two $\tau$-structures have the same Parikh image, the structures are isomorphic. Thus $\mathcal{C}$ is invariant under Parikh images. Hence $\mathcal{C}$ is uniquely characterised by its Parikh image $\pa(\mathcal{C})$, which, since $\pa(\la)=\pa(\mathcal{C})$, is  a finite union of linear sets with at most ${n}$ generators.

\textbf{Claim:} For every linear set $A\subseteq \N^n$, where $n=2^k$, there exists a formula $\varphi_A$ of $\MSOTC$ of vocabulary $\tau=\{X_1,\dots X_k\}$ such that $\varphi_A$ defines the class of $\tau$-structures that have $A$ as their Parikh image.

With the help of the above claim, the theorem follows in a straightforward manner.
Let $A_1,\dots,A_m$ be a finite collection of linear sets such that $\pa(\mathcal{C})=A_1\cup\dots\cup A_m$ and let $\varphi_{A_1},\dots, \varphi_{A_m}$ be the related $\MSOTC$-formulas of vocabulary $\tau$ provided by the claim. Clearly $\psi \dfn \varphi_{A_1}\lor\dots\lor \varphi_{A_m}$ defines $\mathcal{C}$.


\noindent\textbf{Proof of the Claim:} Let $A\subseteq \N^n$ be a linear set with $n$ generators, i.e.,
\[
A = \{\vec{v}_{0}  + \sum_{j=1}^n  a_j \vec{v}_{j} \mid a_1,\dots,a_n\in\N\}, \text{ for some $\vec{v}_{0},\vec{v}_{1},\dots,\vec{v}_{n}\in \N^n$}.
\]
For each tuple $\vec{v}\in\N^n$ and $n$-tuple of monadic second-order variables $\vec{Z}$, let $\size(\vec{Z},\vec{v})$ denote the $\FO$-formula stating that, for each $i$, the size of the extension of $\vec{Z}[i]$ is $\vec{v}[i]$.
For $0\leq i \leq n$, we introduce fresh distinct $n$-tuples of monadic variable symbols $\vec{Z}_i$ and define
\[
\varphi_{\mathrm{gen}} \dfn \bigwedge_{0 \leq i\leq n} \size(\vec{Z}_i,\vec{v}_i).
\]
Let $\vec{R}_1,\dots \vec{R}_n$ be fresh distinct n-tuples of monadic second-order variables and let $S_1,\dots, S_n$ be fresh distinct monadic second-order variables. Define
\begin{multline}
\varphi^*_A \dfn \exists \vec{Z}_0 \dots \vec{Z}_n \vec{R}_1 \dots \vec{R}_n S_1\dots S_n \, \varphi_{\mathrm{gen}} \land \\
 \bigwedge_{1\leq i,j \leq n} \times(\vec{Z}_i[j], S_i, \vec{R}_i[j]) \land  \bigwedge_{1\leq i \leq n}   +(\vec{Z}_0[i], \vec{R}_1[i], \dots, \vec{R}_n[i],Y_i),
\end{multline}
where  $\times$ and $+$ refer to the $\MSOTC$-formulas defined in Example \ref{ex:predicates}. 
Finally define $\varphi_A \dfn \exists Y_1\dots Y_n \, \varphi_{BC} \land \varphi^*_A$, where $\varphi_{BC}$ is an $\FO$-formula stating that, for each $i$, the extension of $Y_i$ is the extension of the Boolean combination of the variables in $\tau$ that $Y_i$ represents.
A $\tau$-structure $\mB$ satisfies $\varphi_A$ if and only if the Parikh image of $\mB$ is $A$.
\end{proof}

\section{Conclusion} \label{seconc}

There are quite a number of interesting challenging questions
regarding the expressive power within second-order transitive-closure
logic.
\begin{enumerate}
\item
We have shown that $\MSOTC$ can do counting, and thus can certainly
express some queries not expressible in fixpoint logic FO(LFP)\@.  
A natural question is whether $\MSOTC$ can also be separated
from the counting extension of FO(LFP)\@.  Note that $\MSOTC$ can
express numerical predicates in NLOGSPACE, while counting
fixpoint logic can express numerical predicates in PTIME\@.
Thus, over the empty vocabulary, the question seems related to a
famous open problem from complexity theory.  Note however, that
it is not even clear that $\MSOTC$ can \emph{only} express numerical
predicates in NLOGSPACE.
\item
The converse question, whether there is a fixpoint logic query
not expressible in $\MSOTC$, is fascinating.  On ordered
structures, this would show that there are problems in PTIME that are not
in NLIN, which is open (we only know that the two classes are
different \cite{papadimitriou}).  On unordered structures,
however, we actually
conjecture that the query about a binary relation (transition
system) $R$ and two nodes $a$ and $b$, that asks whether $a$ and
$b$ are \emph{bisimilar} w.r.t.\ $R$, is not expressible in $\MSOTC$.
\item
In stating Theorem~\ref{thm:fotc} we recalled that
$\SO(\mathrm{arity}\, k)(\TC)$ captures the complexity class
$\NSPACE(n^k)$, on strings. What about ordered structures in
general?  Using the standard adjacency matrix encoding of
a relational structure as a string \cite{immerman1998}, it
follows that on ordered structures over vocabularies with maximal arity $a$, 
$\SO(\mathrm{arity}\, {k\cdot a})(\TC)$ can express all queries
    in
$\NSPACE(n^k)$.  Can we show that this blowup in arity is
necessary?  For example, can we show that $\MSOTC$ does \emph{not}
capture NLIN over ordered graphs (binary relations)?
\item
In the previous section
we have clarified the relationship between 
$\MSOTC$ and order-invariant $\MSO$, over unary vocabularies.
What about higher arities?
\end{enumerate}

\bibliography{bibfile,bibfileFlavio}

\end{document}